\setlist{leftmargin=5mm}
\newcolumntype{C}[1]{>{\centering\let\newline\\\arraybackslash\hspace{0pt}}m{#1}}
\newcommand{\ie}{\emph{i.e., }}
\newcommand{\eg}{\emph{e.g., }}
\newcommand{\etal}{\emph{et al.}}
\newcommand{\st}{\emph{s.t. }}
\newcommand{\wrt}{\emph{w.r.t. }}
\newcommand{\cf}{\emph{cf. }}
\DeclareMathOperator*{\argmax}{arg\,max}
\DeclareMathOperator*{\pp}{\textbf{p}} 
\DeclareMathOperator*{\uu}{\textbf{u}} 
\DeclareMathOperator*{\p0}{\textbf{p}^0} 
\DeclareMathOperator*{\u0}{\textbf{u}^0} 
\begin{document}
\title{BiRank: Towards Ranking on Bipartite Graphs}
\author{Xiangnan~He, Ming~Gao~\IEEEmembership{Member,~IEEE}, Min-Yen~Kan~\IEEEmembership{Member,~IEEE} and Dingxian Wang
\IEEEcompsocitemizethanks{
	\IEEEcompsocthanksitem X. He is with Web IR/NLP group, School of Computing, National University of Singapore, Singapore. E-Mail: xiangnanhe@gmail.com 
	\IEEEcompsocthanksitem M. Gao is with Software Engineering Institute, East China Normal University, China. E-Mail: mgao@sei.ecnu.edu.cn 
	\IEEEcompsocthanksitem M-Y. Kan is with Web IR/NLP group, School of Computing, National University of Singapore, Singapore. E-Mail: kanmy@comp.nus.edu.sg
	\IEEEcompsocthanksitem D. Wang is with Ranking team, Search Science Department, eBay Inc, China. E-Mail: diwang@ebay.com
}
}

\markboth{IEEE Transactions on Knowledge and Data Engineering, Submission 2016}
{}

\IEEEtitleabstractindextext{
\begin{abstract}
	The bipartite graph is a ubiquitous data structure that can model the relationship between two entity types: for instance, users and items, queries and webpages.
	In this paper, we study the problem of ranking vertices of a bipartite graph, based on the graph's link structure as well as prior information about vertices (which we term a \textit{query vector}).
	We present a new solution, BiRank, which iteratively assigns scores to vertices and finally converges to a unique stationary ranking.
	In contrast to the traditional random walk-based methods, BiRank iterates towards optimizing a regularization function, which smooths the graph under the guidance of the query vector. Importantly, we establish how BiRank relates to the Bayesian methodology, enabling the future extension in a probabilistic way. 
	To show the rationale and extendability of the ranking methodology, we further extend it to rank for the more generic $n$-partite graphs. 
	BiRank's generic modeling of both the graph structure and vertex features enables it to model various ranking hypotheses flexibly. 
	To illustrate its functionality, we apply the BiRank and TriRank (ranking for tripartite graphs) algorithms to two real-world applications: a general ranking scenario that predicts the future popularity of items, and a personalized ranking scenario that recommends items of interest to users. 
	Extensive experiments on both synthetic and real-world datasets demonstrate BiRank's soundness~(fast convergence), efficiency~(linear in the number of graph edges) and effectiveness~(achieving state-of-the-art in the two real-world tasks).
\end{abstract}

\begin{IEEEkeywords}
	Bipartite graph ranking, graph regularization, n-partite graphs, popularity prediction, personalized recommendation. 
\end{IEEEkeywords} }

\maketitle

\IEEEdisplaynotcompsoctitleabstractindextext
\IEEEpeerreviewmaketitle

\section{Introduction}
\label{sec:intro}
\IEEEPARstart
Graphs provide a universal language to represent relationships between entities. In real-world applications, not only should the relationships between entities of the same type be considered,
but the relationships between different types of entities should also be modeled. Such relationships naturally form a bipartite graph, containing rich information to be mined from.  For example, in YouTube, the videos and users form a bipartite relationship where edges indicate a viewing action; in Web search, the relationships between queries and search engine result pages are user actions (``clicks''), which provide important relevance judgments from the user's perspective.

A fundamental task in the mining of bipartite graphs is to rank vertices against a specific criterion. 
Depending on the setting, assigning each vertex a ranking score can be used for many tasks, including the estimation of vertex importance (popularity prediction) and the inference of similar vertices to a target vertex (similarity search), and edge suggestion for connecting a target vertex (link prediction and recommendation).
Existing work on graph ranking have largely focused on unipartite graphs, including PageRank~\cite{pagerank:1999}, HITS~\cite{HITS:1999}\footnote{Note that although HITS does handle bipartite graphs, the algorithm was designed for ranking on unipartite graphs by treating vertices with two roles -- hub and authority.}, and many of their variants~\cite{Haveliwala:2002,SALSA:2000, liu2008browserank,gao2011semi-pagerank}. Although several works~\cite{Co-HITS:2009, Sun:2005, li2014spam} have considered ranking on bipartite graphs, they have either focused on a specific application or adapted existing algorithms to handle the bipartite case.  In our opinion, the work up to the current time, lacks a thorough theoretical analysis. 

In this paper, we focus on the problem of ranking vertices of bipartite graphs. We formulate the ranking problem in a generic manner -- accounting for both the graph's structural information and the proper incorporation of any prior information for vertices, where such vertex priors can be used to encode any features of vertices. The main contributions of this paper are summarized as follows:
\begin{itemize}
	\item We develop a new algorithm -- BiRank -- for addressing the ranking problem on bipartite graphs, and show its convergence to a unique stationary point;
	\item We analyze BiRank through the formalism of graph regularization, and present a complementary Bayesian view. These two views enable future extensions to be grounded and compelling from a theoretically principled way (either algebraically or probabilistically).
	\item We deploy BiRank to the general ranking scenario of item popularity prediction, illustrating how to parameterize it to encode several ranking hypotheses;
	\item We extend the methodology to rank on the more generic $n$-partite graphs, and employ it for a personalized ranking scenario by mining tripartite graphs.
	\item We conduct extensive experiments to justify our methods for the two real-world ranking scenarios of popularity prediction and personalized recommendation.
\end{itemize}

The paper is organized as follows.
After reviewing related works in Section~\ref{sec:related}, we formalize the problem in Section~\ref{sec:pre}.
Then we describe the BiRank algorithm in Section~\ref{sec:algorithm}, and interpret it from two views in Section~\ref{sec:regu}. 
In Section~\ref{sec:applications}, we discuss how to apply BiRank to popularity prediction and personalized recommendation. We conduct experiments in Section~\ref{sec:exper},
before concluding the paper in Section~\ref{sec:conclusion}. 

\section{Related Work}
\label{sec:related}
BiRank, which ranks vertices of a bipartite graph, can be categorized as a link-based object ranking method under the paradigm of link mining~\cite{Survey:link_mining2005}. 
In this section, we focus on related work that contribute in the ranking method, and omit discussion of other relevant issues such as efficiency and evolving graphs.
We then review work that can benefit from such bipartite graph ranking, forming the potential downstream applications of BiRank. 

\subsection{Graph Ranking Methods}
In the context of web graph ranking, PageRank~\cite{pagerank:1999} and HITS~\cite{HITS:1999} are the most prominent methods. PageRank estimates the importance score of vertices as the stationary distribution of a random walk process -- starting from a vertex, the surfer randomly jumps to a neighbor vertex according to the edge weight. 
HITS assumes each vertex has two roles: \textit{hub} and \textit{authority}, transforming the graph to a bipartite graph. A vertex has a high authority score if it is linked by many vertices with hub score, and a vertex has a high hub score if it links to many authoritative vertices. 

Many variants start from the basic themes of PageRank and HITS. Ng \etal~\cite{ng2001stable} studied the stability of the two algorithms, finding HITS more sensitive to small perturbations in the graph structure under certain situations. They proposed two variants --- Randomized HITS and Subspace HITS --- that yield more stable rankings. Similarly, Lempel \etal~\cite{SALSA:2000} found that applying HITS on graphs with TKCs~(\textit{tightly knit communities}, \ie small but highly interconnected set of vertices) fails to identify meaningful authority vertices. They devised SALSA as a stochastic variant of HITS, for alleviating the TKC effect. Haveliwala~\cite{Haveliwala:2002} proposed topic-sensitive PageRank~(also known as \textit{personalized PageRank}) by replacing the uniform teleportation vector with a non-uniform vector that encodes each vertex's topic score~(\cf query vector in our BiRank context). 
Later on, Ding \etal~\cite{Ding:2002} unified HITS and PageRank under a normalized ranking framework.
Inspired by the discrete-time Markov process explanation of PageRank, Liu \etal~\cite{liu2008browserank} also proposed BrowseRank based on continuous time Markov process, exploiting user behavior data for page importance ranking. To incorporate side information on vertices and edges into ranking, Gao \etal~\cite{gao2011semi-pagerank} extended PageRank in a semi-supervised way by learning the transition matrix based on the features on vertices and edges. 

Along a separate line of work -- ranking on graphs based on regularization theory~\cite{Zhou04learningwith, Smola:2003, Agarwal:2006} -- has gained popularity within the machine learning community. These works mainly consider the problem of labeling vertices of a graph from partially known labels, also termed \textit{semi-supervised learning} or \textit{manifold learning} on graphs. Smola \etal~\cite{Smola:2003} summarized early works on graph kernels~(\eg Diffusion kernels), and formulated a family of regularization operators on graphs to encompass such kernels. 
Inspired by it, Zhou \etal~\cite{Zhou04learningwith} developed a regularization framework consisting of two constraints: smoothness and fitting,
proposing an iterative algorithm \cite{Zhou04rankingon} for optimizing the regularization function. 
Later on, they \etal~supplemented the regularization framework by developing a discrete analytic theory of graphs~\cite{zhou:2005} and extending it to cover directed graphs~\cite{Zhou:2005:directed1}. Agarwal~\cite{Agarwal:2006} further extended the regularization framework by replacing the fitting term~(\ie sum of squared errors) to the \textit{hinge} ranking loss, proposing an algorithm with similarities to solving \textit{support vector machine} to optimize the regularization function. 

The above discussed works have all focused on ranking for homogeneous graphs, where vertices are of the same type. Our proposed BiRank targets the task of ranking for bipartite graphs, where vertices are of two different types.
Separately handling the two vertex types is very important for ranking in bipartite graphs for many applications, a claim we validate through our experiments later.
Inspired by the graph regularization framework \cite{zhou:2005}, we develop the BiRank algorithm, which can be seen an extension of the manifold ranking algorithm \cite{Zhou04rankingon} on bipartite graphs. 

\subsection{Ranking on Bipartite Graphs}
There are other algorithms developed for bipartite graph ranking that target specific applications.
As a natural way to represent relationship between two types of entities, bipartite graphs have been widely used across domains. As a consequence, ranking on bipartite graph data have been explored to address many applications. 
For example, in Web search, Deng \etal~\cite{Co-HITS:2009} modeled queries and URLs for query suggestion, Cao \etal~\cite{Cao2011} considered the co-occurrence between entities and queries for entity ranking, Li \etal~\cite{li2014spam} modeled users and their search sessions for detecting click spam, and Rui \etal~\cite{rui2007bipartite} mined visual features and the surrounding texts for Web image annotation. In practical recommender systems, bipartite graphs methods have been used for Twitter user recommendation~\cite{Gupta:2013} and YouTube video recommendation~\cite{Baluja:2008}. In the domain of natural language processing, Parveen \etal~\cite{parveen2014multi} generated multi-document summarization based on the relationship of sentences and lexical entities.

In terms of the ranking technique, these works share the same cornerstone --- they all rank by iteratively propagating scores on the graph; either through a PageRank-like random walk or a HITS-like iterative process -- which is adjusted for use on bipartite graphs. The prominent advantage of such propagation-based methods is that the global structure of the graph can be implicitly considered, which is an effective way to deal with the data sparsity and make use of the graph structure. Similar to their ranking algorithms, our proposed BiRank is also a propagation-based method; however, the main difference lies in the normalization strategy used in the iterative process. The symmetric normalization used in BiRank normalizes an edge weight by both of its vertex ends, which accords a smoothing on the graph that can be explained by the regularization theory~\cite{zhou:2005}. As a result, extensions to the algorithm, such as incorporating more features about vertices and edges, can be achieved in a theoretically principled way.
More importantly, we believe such a bridge with the graph regularization theory and Bayesian framework allows BiRank a broader algorithmic extensions that are difficult to achieve by PageRank, HITS and their variants. For example, we can adjust the score propagation process to use a different ranking-based objective~(Section~\ref{ss:interpret}), and learn the combination parameters in an automatic way~(Section~\ref{ss:bayesian}). We will study these algorithmic extensions of BiRank in the future.





\section{Problem Formulation}
\label{sec:pre}
We first present the bipartite graph model and then give the notation convention used. We then formalize the ranking problem that we address in this paper. 

\noindent\textbf{Notations}. Let $G = (U\cup P, E)$ be a bipartite graph, where $U$ and $P$ represent the vertex sets of the two entity types respectively, and $E$ represents the edge set ({\it n.b.}, bipartite graphs have edges only between vertices of the two different types).
Figure~\ref{fig:bipartite} shows an example of the bipartite structure.
\vspace{-15pt}
\begin{figure}[h]
	\centering
	\includegraphics[width=0.3\textwidth]{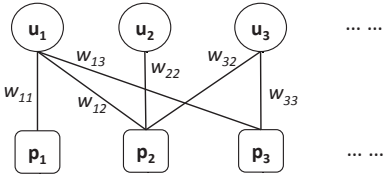}
	\caption{Bipartite User--Item Structure.}
	\label{fig:bipartite}
\end{figure}
\vspace{-5pt}

We use $u_i$ to denote the $i$-th vertex in $U$, and $p_j$ to denote the $j$-th vertex in $P$, where $1\le i\le |U|$ and $1\le j\le |P|$; set cardinality $|U|$ denotes number of elements in $U$. Edges carry 
non-negative weights $w_{ij}$, modeling the relationship strength between the connected vertices $u_i$ and $p_j$ (if $u_i$ and $p_j$ are not connected, their edge weight $w_{ij}$ is zero). 
As such, we can represent all edge weights of the graph as a $|U|\times|P|$ matrix $W = [w_{ij}]$. 
For each vertex $u_i$, we denote its weighted degree (\ie sum of connected edges' weights) as $d_i$, and use a diagonal matrix $D_u$ to denote the weighted degrees of all vertices in $U$ such that $(D_u)_{ii} = d_i$; and similarly, for $d_j$ and $D_p$. 
Note that in this paper, we deal with undirected bipartite graphs, \ie we do not model any directionality in the edges. 


\noindent\textbf{Problem Definition.}
In a nutshell, the general graph ranking problem is to assign each vertex a score \st a given expectation is satisfied. For example, PageRank~\cite{pagerank:1999} infers an importance score for each vertex to capture the intuition that an important vertex should be linked by many other important vertices. 
As in many applications, a ranking simply based on the graph structure is insufficient; often, there also exists some prior information (or features) on the vertices. For example, in webpage ranking, we already know some webpages are important~(\eg official sites), and wish to incorporate this knowledge into the ranking process;
in the application of recommendation, we need to consider a user's historical actions as the prior knowledge of the user's preference. 
We term such prior knowledge as a \textit{query vector}, which encodes the prior belief of the score of vertices with respect to the ranking criterion. 
In this paper, we study the bipartite graph ranking problem where a query vector is given, formally defined as:
\begin{description}
	\item[\textbf{Input}:] A bipartite graph $G = (U\cup P, E)$ with its weight matrix $W$.  A query vector $\u0,\p0$ encodes the prior belief concerning the vertices in $U$ and $P$, respectively, with respect to the ranking criterion.
	\item[\textbf{Output}:] A function $f : P\cup U \to \mathbb{R}$, which maps each vertex in $G$ to a real number.
\end{description}
\noindent The function value $f(u_i)$ and $f(p_j)$ form the ranking score of vertex $u_i$ and $p_j$, respectively. To keep the notation simple, we also use $u_i$ and $p_j$ to denote the ranking score, and represent the final ranking score of all vertices as two ranking vectors $\uu=[u_i]$ and $\pp=[p_j]$.

\section{Iterative BiRank}
\label{sec:algorithm}
With the preliminaries settled, we now detail the iterative paradigm of the BiRank algorithm.
We first describe how we design the ranking algorithm, analyzing its time complexity. Then we study its convergence properties in theory. Finally we discuss the connection of BiRank with other similarly-styled iterative bipartite graph ranking algorithms. 

\subsection{BiRank's Design}

To rank vertices based on the graph structure, seminal algorithms like PageRank and HITS have been proposed. Motivated from their design, our intuition for bipartite graph ranking is that the scores of vertices should follow a smoothness convention, namely that: \textit{a vertex~(from one side) should be ranked high if it is connected to higher-ranked vertices (from the other side)}. This rule defines a mutually-reinforcing relationship, which is naturally implemented as an iterative process that refines each vertex's score as the sum of the contribution from its connected vertices:
\begin{equation*}
p_j = \sum_{i=1}^{|U|} w_{ij} u_i; \quad u_i = \sum_{j=1}^{|P|} w_{ij} p_j.
\end{equation*}
As it is an additive update rule, normalization is necessary to ensure the convergence and stability. Two strategies have been widely adopted in previous work: 1) a PageRank-style that normalizes $W$ (and $W^T$) to a stochastic matrix, leading to a probabilistic random walk explanation; and 2) a HITS-style method that normalizes the ranking scores of vertices after each iteration.
In our BiRank method, we adopt the symmetric normalization scheme, which is inspired from Zhou {\it et al.}'s work \cite{Zhou04learningwith} addressing  semi-supervised learning on graphs. The idea is to smooth an edge weight by the degree of its two connected vertices simultaneously:
\begin{equation}
\label{eq:simple1}
p_j = \sum_{i=1}^{|U|} \frac{w_{ij}}{\sqrt{d_i} \sqrt{d_j}} u_i; \quad u_i = \sum_{j=1}^{|P|} \frac{w_{ij}}{\sqrt{d_i} \sqrt{d_j}} p_j,
\end{equation}
\noindent where $d_i$ and $d_j$ are the weighted degrees of vertices $u_i$ and $p_j$, respectively. 
The use of symmetric normalization is a key characteristic of BiRank, allowing edges connected to a high-degree vertex to be suppressed through normalization, lessening the contribution of high-degree vertices. This has the beneficial effect of toning down the dependence of top rankings on high-degree vertices, a known defect of the random walk-based diffusion methods~\cite{Baluja:2008}.  This gives rise to better quality results. 



To account for the query vector $\p0$ and $\u0$ that encode the prior belief on the importance of the vertices, one can either opt for 1) incorporating the graph ranking results for combination in post-processing ({\it a.k.a} late fusion), or 2) factoring the query vector directly into the ranking process. The first way of post-processing yields a ranking that is a compromise between two rankings; for scenarios that the query vector defines a full ranking of vertices, this ensemble approach might be suitable. However, when the query vector only provides \textbf{partial} information -- {\it i.e.}, only a small proportion of vertices have a prior score while most other vertices have no prior information -- this method fails to identify an optimal ranking. 
For example, in the application of personalized recommendation (see Section~\ref{ss:recom}), the aim is to rank \textit{unconsumed items} for a user; the query vector encodes the user's known preference, which is a sparse vector with the consumed items as non-zeros. 
In this case, simply combining the ranking from graph structure and query vector via post processing does not work, since the ranking of \textit{unconsumed items} will solely depend on the graph structure. 
As such, in BiRank we opt for the second way that factors the query vector directly into the ranking process, which has the advantage of using the query vector to guide the ranking process:
\begin{equation}
\begin{aligned}
\label{eq:birank1}
p_j &= \alpha\sum_{i=1}^{|U|} \frac{w_{ij}}{\sqrt{d_i} \sqrt{d_j}} u_i + (1-\alpha)p_j^0; \\
u_i &= \beta\sum_{j=1}^{|P|} \frac{w_{ij}}{\sqrt{d_i} \sqrt{d_j}} p_j + (1-\beta)u_i^0,
\end{aligned}
\end{equation}
where $\alpha$ and $\beta$ are hyper-parameters to weight the importance of the graph structure and the prior query vector, to be set between $[0,1]$. To keep notation simple, we can also express the iteration in its equivalent matrix form:
\begin{equation}
\begin{aligned}
\label{eq:birank2}
\textbf{p} &= \alpha S^T \textbf{u} + (1-\alpha)\p0; \\
\textbf{u} &= \beta S \textbf{p} + (1-\beta)\u0,
\end{aligned}
\end{equation}
\noindent where $S = D_u^{-\frac{1}{2}} W D_p^{-\frac{1}{2}}$, the symmetric normalization of weight matrix $W$.  We call this set of update rules the \textbf{\textit{BiRank iteration}}, which forms the core of the iterative BiRank algorithm.  In a nutshell, BiRank first randomly initializes the ranking vector, and then iteratively executes the BiRank iteration until convergence (summarized in Algorithm~\ref{alg:birank}).

For convergence, one can either monitor the change of ranking vectors $\pp, \uu$ across iterations, 
or rely on a held-out validation data to prevent overfitting. Moreover, the numerical convergence of BiRank is theoretically guaranteed, discussed later in Section~\ref{ss:convergence}. 

\begin{algorithm}[t]
	\caption{The Iterative BiRank Algorithm \label{alg:birank}}
	\KwIn{Weight matrix $W$, query vector $\p0, \u0$, and hyper-parameters $\alpha, \beta$;}
	\KwOut{Ranking vectors $\pp, \uu$;}
	Symmetrically normalize $W$:\quad $S = D_u^{-\frac{1}{2}} W D_p^{-\frac{1}{2}}$\;
	Randomly initialize $\pp$ and $\uu$\;
	\While{Stopping criteria is not met} { 
		$\pp \leftarrow \alpha S^T \uu + (1-\alpha)\p0$\;
		$\uu \leftarrow \beta S \pp + (1-\beta)\u0$\;
	}
	\Return{$\pp$ and $\uu$}
\end{algorithm}

\subsubsection{Time Complexity Analysis}
\label{ss:time_complexity}
It is easy to show that a direct implementation of BiRank iteration in Eq.~(\ref{eq:birank2}) has a time complexity of $O(|P|\cdot|U|)$, mainly due to the multiplication of $S^{T}\uu$ and $S\pp$. However, note that in real-world applications, the matrix $S$ is typically very sparse; for example in recommender systems, the user--item matrix to model is always over $99\%$ sparse (\eg Netflix challenge dataset).
In this case, a representation of sparse matrix only needs to account for non-zero entries~(which correspond to the edges of the bipartite graph), instead of all $|P|\cdot|U|$ entries. As such, the real-time cost needed for BiRank is $O(c |E|)$, where $c$ denotes the number of iterations executed to converge, and $|E|$ denotes number of edges in the graph. Thus, BiRank is linear with respect to number of edges, ensuring good scalability to large-scale graphs. Moreover, our empirical experience show that BiRank has a very fast convergence rate --- 10 iterations are usually enough for convergence. One reason is that it can be seen as optimizing a convex function effectively using alternating optimization, discussed later in Section~\ref{sec:regu}. 


\subsection{Convergence Analysis of BiRank}
\label{ss:convergence}
We show that BiRank can converge to a stationary and unique solution regardless of the initialization, followed by a theoretical analysis of the convergence speed. 
\newtheorem{theorem}{Theorem}
\newtheorem{lemma}{Lemma}

\subsubsection{Proof of Convergence}
\label{ss:convergence_proof}
It is clear that the behavior of BiRank depends on the hyper-parameters $\alpha$ and $\beta$, which are in the range $[0,1]$. To make a through analysis, we need to carefully consider the boundary conditions. Considering the two boundaries 0 and 1, we divide the proof into the following three cases: 

\begin{proof}
	\noindent\textbf{1. $\alpha=0$ or $\beta=0$.} 
	When $\alpha=0$, the vector $\pp = \p0$ is unchanged across iterations. Thus $\uu$, which depends on $\pp$ and $\u0$, will also be unchanged after the first iteration. Similarly for the case of $\beta=0$. \\
	
	\noindent\textbf{2. $\alpha=1$ and $\beta=1$.}
	In this case, the query vectors do not have any impact on the ranking, and the ranking is solely determined by the graph structure. The iterative update rule then reduces to Eq.~(\ref{eq:simple1}), whose matrix form is $\pp = S^T \uu, \uu = S \pp.$
	By further reducing this, we obtain:
	\begin{equation}
	\begin{aligned}
	\textbf{p}^{(k)} &= (S^T S)\textbf{p}^{(k-1)} = ... = (S^T S)^{k}\textbf{p}^{(0)}, \\
	\textbf{u}^{(k)} &= (S S^T)\textbf{u}^{(k-1)} = ... = (S S^T)^{k}\textbf{u}^{(0)},
	\end{aligned}
	\end{equation}
	where $k$ denotes the number of iterations, and $\textbf{p}^{(0)}$, $\textbf{u}^{(0)}$ denote the initial ranking scores for vertices. Note that matrix $S^T S$ and $S S^T$ are both symmetric matrices. According to a lemma in standard linear algebra~\cite{book:matrix}: \vspace{-5pt}
	\begin{lemma}
		If $M$ is a symmetric matrix, and $\textbf{v}$ is a vector not orthogonal to the principle eigenvector of $M$, then the limit of $M^{k}\textbf{v}$ (after scaling to a unit vector) converges to the principle eigenvector of $M$ with $k$ increasing without bound.
	\end{lemma}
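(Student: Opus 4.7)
The plan is to invoke the spectral theorem for real symmetric matrices and reduce the claim to a direct computation in an orthonormal eigenbasis. First I would diagonalize $M$: since $M$ is symmetric, there exists an orthonormal basis $\{v_1, v_2, \ldots, v_n\}$ of eigenvectors with real eigenvalues, which I order so that $|\lambda_1| > |\lambda_2| \ge \cdots \ge |\lambda_n|$, with $v_1$ denoting the principal eigenvector. The strict gap $|\lambda_1| > |\lambda_2|$ is implicit in the lemma's reference to ``the'' principal eigenvector, which presumes a well-defined dominant direction.

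Next I would expand the input vector in this basis as $\textbf{v} = \sum_{i=1}^{n} c_i\, v_i$ with $c_i = v_i^T \textbf{v}$, and observe that the non-orthogonality hypothesis is precisely $c_1 \neq 0$. Applying $M$ repeatedly and using $M v_i = \lambda_i v_i$ then gives
\begin{equation*}
M^k \textbf{v} \;=\; \sum_{i=1}^{n} c_i \lambda_i^k\, v_i \;=\; \lambda_1^k \left( c_1 v_1 + \sum_{i=2}^{n} c_i \left(\frac{\lambda_i}{\lambda_1}\right)^{\!k} v_i \right).
\end{equation*}
Because $|\lambda_i/\lambda_1| < 1$ for each $i \ge 2$, every term in the tail sum decays geometrically to zero, so the bracketed quantity converges in norm to $c_1 v_1$.

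Finally I would normalize. Dividing $M^k \textbf{v}$ by $\|M^k \textbf{v}\|$ cancels the scalar $\lambda_1^k$ up to its sign, and continuity of the normalization map (well-defined since $c_1 v_1 \neq 0$) yields
\begin{equation*}
\lim_{k\to\infty} \frac{M^k \textbf{v}}{\|M^k \textbf{v}\|} \;=\; \operatorname{sgn}(c_1 \lambda_1^k)\cdot v_1,
\end{equation*}
which is $v_1$ up to an unavoidable sign flip when $\lambda_1 < 0$ and $k$ is odd; since the principal eigenvector is intrinsically defined only up to sign, this is the claimed convergence.

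The main obstacle is not so much a technical hurdle as a matter of hypothesis hygiene: the argument silently requires the strict spectral gap $|\lambda_1| > |\lambda_2|$, since any tie in magnitude (for instance $\lambda_n = -\lambda_1$) would cause the normalized sequence to oscillate between two distinct limit directions and convergence would fail. When the lemma is subsequently applied to $S^T S$ and $S S^T$ in the convergence proof of BiRank, one would therefore need to verify separately that the top eigenvalue of these Gram-like matrices is simple, or at least strictly separated from the next in magnitude; this follows from the nonnegativity and connectedness structure of the normalized bipartite adjacency matrix via the Perron--Frobenius theorem.
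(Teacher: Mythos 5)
Your proof is correct and is the standard power-method argument; the paper does not actually prove this lemma itself (it cites it from a linear-algebra text), and the identical eigendecomposition computation $M^k\textbf{v} = \lambda_1^k\bigl(c_1 v_1 + \sum_{i\ge 2} c_i (\lambda_i/\lambda_1)^k v_i\bigr)$ appears essentially verbatim in the paper's own proof of Theorem~2. Your added caveats --- that a strict spectral gap $|\lambda_1| > |\lambda_2|$ is silently required, and that the limit is only defined up to a sign when $\lambda_1 < 0$ --- are correct points the paper glosses over, and both are harmless in the paper's application since $S^TS$ and $SS^T$ are positive semidefinite with top eigenvalue $1$ (simple under connectivity of the bipartite graph, via Perron--Frobenius).
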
\vspace{-5pt}
	By the lemma, we can see that with reasonable initialization, the iterative process will converge to a stationary solution $\textbf{p}^*$ and $\textbf{u}^*$, which are the principle eigenvector of matrix  $S^T S$ and $S S^T$, respectively. \\
	
	\noindent\textbf{3. Normal cases.} We now consider the normal ranking scenarios that $\alpha$ and $\beta$ are in the range of (0,1), or one of $\alpha,\beta$ is 1, meaning that both the graph structure and query vectors can affect the ranking process. 
	Without loss of generality, we prove the convergence of $\textbf{p}$. 
	
	First, we eliminate $\textbf{u}$ in $\textbf{p}$'s update rule:
	\begin{equation}
	\label{eq:reduce_u}
	\pp = \alpha \beta (S^T S) \pp + \alpha(1-\beta)S^{T}\u0 + (1-\alpha)\p0. 
	\end{equation}
	
	Let matrix $M$ be $\alpha \beta (S^T S)$ and vector $\textbf{z}_0$ be $\alpha(1-\beta)S^{T}\u0 + (1-\alpha)\p0$, which are both invariant across iterations. Then, we have:
	\vspace{-3pt}
	\begin{equation}
	\begin{aligned}
	\label{eq:reduce_iter}
	\textbf{p}^{(k)} = M \textbf{p}^{(k-1)} + \textbf{z}_0 = ... = M^k \textbf{p}^{(0)} + \sum_{t=0}^{k-1} M^t \textbf{z}_0,
	\end{aligned}
	\end{equation}
	\noindent where $k$ denotes the number of iterations, and $\textbf{p}^{(0)}$ denotes the initial ranking vector of $\pp$. Assuming $M$'s eigenvalues are in the range of (-1, 1), we can obtain:
	\vspace{-3pt}
	\begin{equation*}
	\lim_{k \to \infty} M^k \textbf{p}^{(0)} = 0,\ and\  \lim_{k \to \infty}\sum_{t=0}^{k-1} M^t = (I - M)^{-1}.
	\end{equation*}
	\noindent where $I$ denotes the identity matrix. In this case, we can derive the stationary solution of $\pp$ as:
	\begin{equation}
	\label{eq:stat_p}
	\textbf{p}^* = (I - M)^{-1} \textbf{z}_0.
	\end{equation}
	However, the above stationary solution is derived based on the assumption that $M$'s eigenvalues are in the range (-1,1).  Next, we prove the correctness of this assumption. 
	\begin{theorem}
		\label{theorem1}
		$M$'s eigenvalues are in the range $[-\alpha\beta, \alpha\beta]$.
	\end{theorem}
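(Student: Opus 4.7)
The plan is to split the claim into its two halves. The lower bound is essentially free: since $S^T S$ is positive semi-definite, for any eigenpair $(\lambda,\mathbf{x})$ we have $\lambda\|\mathbf{x}\|^2 = \mathbf{x}^T S^T S \mathbf{x} = \|S\mathbf{x}\|^2 \ge 0$, so every eigenvalue of $M = \alpha\beta\,S^T S$ is at least $0$, which is certainly $\ge -\alpha\beta$. The real content is the upper bound: every eigenvalue of $S^T S$ is at most $1$, equivalently $\|S\|_2 \le 1$.

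To establish $S^T S \preceq I$, I would work directly from $S = D_u^{-1/2} W D_p^{-1/2}$. For an arbitrary vector $\mathbf{x} \in \mathbb{R}^{|P|}$, expand
\begin{equation*}
\mathbf{x}^T S^T S \mathbf{x} = \sum_{i=1}^{|U|} \frac{1}{d_i} \Bigl(\sum_{j=1}^{|P|} \frac{w_{ij}}{\sqrt{d_j}}\, x_j\Bigr)^{\!2}.
\end{equation*}
Applying Cauchy--Schwarz to the inner sum with the split $\sqrt{w_{ij}}\cdot(\sqrt{w_{ij}}\,x_j/\sqrt{d_j})$ gives $\bigl(\sum_j \tfrac{w_{ij}}{\sqrt{d_j}} x_j\bigr)^{2} \le \bigl(\sum_j w_{ij}\bigr)\bigl(\sum_j \tfrac{w_{ij} x_j^2}{d_j}\bigr) = d_i \sum_j \tfrac{w_{ij} x_j^2}{d_j}$. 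The $d_i$ factors cancel, and swapping the order of summation yields $\sum_j x_j^2 \cdot d_j^{-1} \sum_i w_{ij} = \sum_j x_j^2 = \|\mathbf{x}\|^2$, which is exactly the desired inequality.

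A more conceptual alternative is to embed $S$ in the normalized adjacency of the full bipartite graph, $\mathcal{L} = \bigl(\begin{smallmatrix} 0 & S \\ S^T & 0 \end{smallmatrix}\bigr)$, invoke the classical fact that the spectrum of a normalized adjacency matrix lies in $[-1,1]$, and observe that the nonzero eigenvalues of $\mathcal{L}$ are exactly $\pm\sigma_i(S)$; this yields $\|S\|_2 \le 1$ directly.

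Either route gives $S^T S \preceq I$; combined with positive semi-definiteness, the eigenvalues of $S^T S$ lie in $[0,1]$, so the eigenvalues of $M$ lie in $[0, \alpha\beta] \subseteq [-\alpha\beta, \alpha\beta]$. The main obstacle is the Cauchy--Schwarz step (or equivalently the normalized-adjacency bound); once it is in hand, the theorem follows immediately. Note that the bound is in fact one-sided tight---the negative half of the stated interval is never attained---but the looser symmetric form is all the subsequent convergence argument, which needs $|\lambda|<1$ to justify the Neumann series in Eq.~(\ref{eq:reduce_iter}), actually requires.
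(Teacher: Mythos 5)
Your proof is correct, and it takes a genuinely different route from the paper's. The paper conjugates $M$ by $D_p^{\frac{1}{2}}$ to obtain $M_v = \alpha\beta\,(W^T D_u^{-1} W D_p^{-1})$, observes that $W^T D_u^{-1} W D_p^{-1}$ is column-stochastic, and invokes the fact that a stochastic matrix has spectral radius $1$; since similar matrices share eigenvalues, the bound transfers to $M$. You instead bound the Rayleigh quotient of $S^T S$ directly: the Cauchy--Schwarz step is valid (the split $\sqrt{w_{ij}}\cdot \sqrt{w_{ij}}\,x_j/\sqrt{d_j}$ uses $w_{ij}\ge 0$ together with $\sum_j w_{ij}=d_i$ and $\sum_i w_{ij}=d_j$, all given), and positive semi-definiteness supplies the lower bound. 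Your argument is more elementary and self-contained, and it yields the strictly sharper conclusion $\lambda \in [0,\alpha\beta]$, which, as you note, is all the Neumann-series argument needs. What the paper's route buys, and yours does not immediately, is the explicit identification of $S^T S$ as similar to a stochastic matrix: the paper reuses this in Theorem~\ref{theorem2} to assert $|\lambda_1|=1$ exactly, and again when comparing BiRank against BGRM. Under your approach the same fact is recovered by exhibiting the eigenvector $x_j=\sqrt{d_j}$, for which $S^T S\mathbf{x}=\mathbf{x}$; and your second, normalized-adjacency route dovetails with the normalized-Laplacian positive semi-definiteness the paper already invokes in its convexity proof. Either way the theorem as stated follows.
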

	\vspace{-5pt}
	\begin{proof}
		Recall that $M$ is defined as:
		\begin{equation*}
		\begin{aligned}
		\label{eq:M}
		M = \alpha \beta (S^T S) &= \alpha \beta (D_u^{-\frac{1}{2}} W D_p^{-\frac{1}{2}})^T (D_u^{-\frac{1}{2}} W D_p^{-\frac{1}{2}}) \\
		&= \alpha \beta (D_p^{-\frac{1}{2}} W^T D_u^{-1} W  D_p^{-\frac{1}{2}}).
		\end{aligned}
		\end{equation*}
		To see $M$'s eigenvalues are bounded by $\alpha\beta$, we first construct a variant $M_v$ that has the same eigenvalues\footnote{The equality of the eigenvalues is easily shown by using determinants, denoted as $|\cdot|$. Let the eigenvalues of $M_v$ be $\lambda_v$, then we have $|M_v - \lambda_v I| = |D_p^{\frac{1}{2}} (M - \lambda_v I) D_p^{-\frac{1}{2}}| = |D_p^{\frac{1}{2}}|\cdot|M - \lambda_v I|\cdot|D_p^{-\frac{1}{2}}| = |M - \lambda_v I| = 0$, meaning that $\lambda_v$ are also $M$'s eigenvalues.} as $M$:
		\begin{equation*}
		\label{eq:M_prime}
		M_v = D_p^{\frac{1}{2}} M D_p^{-\frac{1}{2}} = \alpha\beta (W^T D_u^{-1} W  D_p^{-1}).
		\end{equation*}
		Note that matrix $W^T D_u^{-1} W  D_p^{-1}$ is a stochastic matrix in which the entries of each column sum to 1. By standard linear algebra~\cite{book:matrix}, for a stochastic matrix, its largest absolute value of the eigenvalues is always 1. Thus, the eigenvalues of $M_v$ are in the range $[-\alpha\beta, \alpha\beta]$, and same must hold for $M$ as they have exactly the same eigenvalues. The proof of $M's$ eigenvalues is finished.
	\end{proof}
	As $M$'s eigenvalues are theoretically guaranteed in the range $[-\alpha\beta, \alpha\beta]$ and in the normal cases $\alpha$, $\beta$ are in the range $(0,1)$, the assumption that $M$'s eigenvalues are in the range $(-1,1)$ holds. Therefore, we conclude that Eq.~(\ref{eq:stat_p}) indeed forms the stationary solution of $\textbf{p}$. To round out the proof, we give the stationary solution of BiRank as follows:
	\begin{equation}
	\begin{aligned}
	\label{eq:stationary}
	\textbf{p}^* &= (I - \alpha \beta S^T S)^{-1} [\alpha(1-\beta)S^{T}\u0 + (1-\alpha)\p0], \\
	\textbf{u}^* &= (I - \alpha \beta S S^T)^{-1} [\beta(1-\alpha)S\p0 + (1-\beta)\u0].
	\end{aligned}
	\end{equation} 
	The convergence proof of BiRank is finished.
\end{proof}

This convergence proof gives an elegant closed-form solution -- for any non-trivial initializations, the iterative algorithm of BiRank will converge to Eq.~(\ref{eq:stationary}). As such, an alternative method to our iterative BiRank is to direct calculate the closed-form stationary solution. 
Even so, we suggest the practitioners following the iterative procedure for two reasons. First, in real-world practice, when there is a large number of vertices to rank, the matrix inversion operation is very expensive, making the calculation of the closed-form solution inefficient. More specifically, matrix inversion is usually assumed to take $O(N^3)$ time~\cite{He:SIGIR16}; thus the time complexity of directly calculating the stationary solution is $O(|P|^3 + |U|^3)$, which even can be much higher than the upper bound of the iterative solution $O(c|P||U|)$.
Second, the iterative process emphasizes the underlying motivation that reinforcing a vertex's importance from its neighbors and the query vector. As such, one does not have to run the iterations until convergence; instead, one can compute the scores by starting from any initialization and performing a fixed number of iterations. 

\subsubsection{Speed of Convergence}
\label{ss:convergence_speed}

Since the behavior of BiRank depends on the graph structure, query vector and hyper-parameters $\alpha, \beta$, we analyze how do these factors impact BiRank's convergence speed. 

In each BiRank iteration, the score of a vertex comes from both its neighbors and the query vector. Since the query vector is static that remains unchanged across iterations, it cannot contribute to any form of divergence in the rankings; thus the main uncertainty for convergence stems from the part of the score diffusion from neighbors. As such, the number of iterations required to converge will increase as $\alpha$ and $\beta$ increase (the empirical evidence in Figure~\ref{fig:convergeRate} also verifies this property). Clearly, the slowest convergence is when $\alpha$ and $\beta$ are set to 1, where the effect of query vector is eliminated. 
When both $\alpha$ and $\beta$ are set to 1, the update of $\pp$ at the iteration $k$ can be written as:
$\textbf{p}^{(k)} = (S^T S)\textbf{p}^{(k-1)}$, 
which essentially can be seen as the power method for the symmetric eigenvalue problem (Chapter 8.2 of \cite{book:matrix}). It is known that the convergence of power method is determined by the second dominant eigenvalue of the transition matrix. In spite of the slight difference that the power iteration requires an additional $L_2$ normalization on the ranking vector (while our BiRank does not), we point out that BiRank shares the same property of convergence speed. 
\begin{theorem}
	\label{theorem2}
	The convergence rate of BiRank depends on the second largest eigenvalue of the matrix $S^T S$ in magnitude. 
\end{theorem}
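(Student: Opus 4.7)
The plan is to reduce to the slowest-converging case $\alpha = \beta = 1$ flagged in the preceding paragraph, since $\alpha\beta < 1$ in the non-boundary regime only introduces an additional contractive factor and therefore speeds things up. Under $\alpha = \beta = 1$ the BiRank iteration collapses to $\textbf{p}^{(k)} = (S^T S)\,\textbf{p}^{(k-1)}$, which is the unnormalized power iteration driven by the symmetric matrix $S^T S$; and by specializing Theorem~\ref{theorem1} to $\alpha=\beta=1$, the eigenvalues of $S^T S$ lie in $[-1,1]$ with the top eigenvalue $\lambda_1 = 1$ attained at the principal eigenvector. Establishing the rate $|\lambda_2(S^T S)|$ in this worst case therefore settles the theorem.

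First I would apply the spectral theorem: since $S^T S$ is real symmetric and positive semi-definite, it admits an orthonormal eigenbasis $\{\textbf{v}_1,\ldots,\textbf{v}_n\}$ with real eigenvalues $1 = \lambda_1 \geq |\lambda_2| \geq \cdots \geq |\lambda_n| \geq 0$. Next I would expand the initial ranking vector $\textbf{p}^{(0)} = \sum_i c_i \textbf{v}_i$ in this basis and unroll the iteration to obtain
\begin{equation*}
\textbf{p}^{(k)} = (S^T S)^k \textbf{p}^{(0)} = c_1 \textbf{v}_1 + \sum_{i=2}^{n} c_i \lambda_i^k \textbf{v}_i.
\end{equation*}
The first summand is fixed under the iteration and corresponds to the limiting stationary direction, while the residual $\textbf{p}^{(k)} - c_1 \textbf{v}_1 = \sum_{i \geq 2} c_i \lambda_i^k \textbf{v}_i$ has norm bounded by $|\lambda_2|^k \sqrt{\sum_{i \geq 2} c_i^2}$ by orthonormality, so it contracts geometrically at rate $|\lambda_2|$, yielding the claim.

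The main obstacle I expect is reconciling BiRank with the classical power method, which inserts an $L_2$ renormalization after each step while BiRank does not. I would resolve this by noting that the convergence rate is extracted directly from the eigen-decomposition above, without ever dividing by $\|\textbf{p}^{(k)}\|$: since $\lambda_1 = 1$ leaves $c_1 \textbf{v}_1$ invariant, no rescaling is needed to isolate the limiting contribution, and the ratio $|\lambda_2|/|\lambda_1| = |\lambda_2|$ emerges unchanged. A secondary subtlety is justifying why focusing on $\alpha = \beta = 1$ is sufficient: for $\alpha,\beta \in (0,1)$, the error recursion $\textbf{p}^{(k)} - \textbf{p}^* = M^k(\textbf{p}^{(0)} - \textbf{p}^*)$ established in Section~\ref{ss:convergence_proof}, with $M = \alpha\beta\,(S^T S)$, multiplies every eigen-component of the initial error by $\alpha\beta\,\lambda_i$ per step, strictly tightening the rate and keeping $|\lambda_2(S^T S)|$ as the binding spectral quantity in the $\alpha\beta \to 1$ limit.
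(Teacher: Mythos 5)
Your proposal follows essentially the same route as the paper's proof: restrict to the worst case $\alpha=\beta=1$, diagonalize the symmetric matrix $S^T S$, expand $\textbf{p}^{(0)}$ in its eigenbasis, and read off the decay rate $|\lambda_2/\lambda_1|$ with $|\lambda_1|=1$ supplied by Theorem~\ref{theorem1}. Your explicit treatment of why the missing $L_2$ renormalization is harmless (the $\lambda_1=1$ component is invariant) and the error-recursion remark for $\alpha\beta<1$ are welcome refinements of points the paper only gestures at, but the core argument is identical.
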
\vspace{-5pt}
\begin{proof}
	As $S^T S$ is a symmetric matrix, it is guaranteed to have $n$ eigenvalues which are real numbers ($n=|P|$). Let its eigenvalues be $\lambda_1, \lambda_2, ..., \lambda_n$, where $|\lambda_1|\ge |\lambda_2|\ge ... \ge|\lambda_n|$, and vectors $\textbf{x}_1, \textbf{x}_2, ..., \textbf{x}_n$ be the corresponding eigenvectors. Then, the starting vector $\textbf{p}^{(0)}$ can be expressed as:
	$\textbf{p}^{(0)} = \sum_{i=1}^n c_i \textbf{x}_i,$
	where $\{c_i\}$ are constant coefficients. Then the update of $\textbf{p}^{(k)}$ can be written as:
	\begin{equation}
	\begin{aligned}\textbf{p}^{(k)} &= c_1(S^T S)^k \textbf{x}_1 + c_2(S^T S)^k \textbf{x}_2 + ... + c_n(S^T S)^k \textbf{x}_n \\
	&= c_1 \lambda_1^k (x_1 + \sum_{i=2}^n c_i (\lambda_i/\lambda_1)^k \textbf{x}_i).
	\end{aligned}
	\end{equation}
	Here we use the fact that $(S^T S)\textbf{x}_i = \lambda_i\textbf{x}_i$. Hence, we see that the non-essential quantities decay at a rate of approximately $|\lambda_2/\lambda_1|$. As we have shown in Theorem~\ref{theorem1}, $S^T S$ has the same eigenvalues with a variant stochastic matrix, thus we have $|\lambda_1| = 1$. The proof is finished. 
\end{proof}
To summarize, the convergence rate of BiRank depends on the normalized adjacency matrix $S$ and parameters $\alpha,\beta$. Analytically, larger $\alpha$ and $\beta$ will lead to slower convergence; theoretically, smaller magnitude of the second dominant eigenvalue of $S^T S$ will result in faster convergence. 
In many applications, for high dimensional but sparse relational data (\eg user behaviors, documents), $S$ is usually of low rank. As a result, $|\lambda_2|$ is a small number, leading to a fast convergence of our BiRank algorithm. 

\subsection{Connection with Other Algorithms}
\label{ss:connection}
There are some bipartite graph ranking algorithms~\cite{Co-HITS:2009, rui2007bipartite, Cao2011} that share a similar spirit with BiRank, though originally developed for specific applications with varying ranking targets.  Specifically, in terms of the iterative ranking process, they have the same update rule form as Eq.~(\ref{eq:birank2}); the main difference is in how to generate the transition matrices ($S$ and $S^T$ for updating $\uu$ and $\pp$, respectively). 
It is instructive to clarify the difference with these algorithms. 
\vspace{-5pt}
\begin{table}[H]
	\begin{center}
		\caption{Transition matrices (\ie $S$ and $S^T$ in Eq.~(\ref{eq:birank2})) of different bipartite graph ranking algorithms. Note that $S^T$ here denotes a matrix, rather than just the transpose of $S$.}
		\vspace{-8pt}
		\label{tab:algo_difference}
		\begin{tabular}{ | l | l | }
			\hline
			\textbf{Method} & \textbf{Definition of Transition Matrices} \\ \hline
			HITS~(Kleinberg \cite{HITS:1999}) & $S = W; S^T = W^T$ \\ \hline
			Co-HITS~(Deng \etal \cite{Co-HITS:2009})	&  $S = W D_p^{-1}; S^T = W^T D_u^{-1}$ \\ \hline
			BGER~(Cao \etal \cite{Cao2011}) & $S = D_u^{-1} W; S^T = D_p^{-1} W^T$ \\ \hline
			BGRM~(Rui \etal \cite{rui2007bipartite}) & $S = D_u^{-1} W D_p^{-1}; S^T = D_p^{-1} W^T D_u^{-1}$ \\ \hline
			BiRank (our proposal) & $S = D_u^{-\frac{1}{2}} W D_p^{-\frac{1}{2}}; S^T = D_p^{-\frac{1}{2}} W^T D_u^{-\frac{1}{2}} $ \\ \hline
		\end{tabular}
	\end{center}
\end{table}
\vspace{-5pt}

Table~\ref{tab:algo_difference} summarizes the ways of constructing transition matrices using our symbol notation. From a high-level view, these algorithms differ in how they utilize the vertex degree to normalize each edge weight (except that HITS does not account for the query vector). 
HITS, the earliest proposed algorithm, uses the original weight matrix $W$ as-is; although the convergence can be guaranteed in theory, HITS is sensitive to outliers in graph~\cite{ng2001stable} and suffers from the tightly knit communities phenomenon~\cite{SALSA:2000}.
Co-HITS~\cite{Co-HITS:2009} normalizes each column of $W$ (and $W^T$) stochastically, having an explanation of simulating random walks on the graph. 
However, random walk methods can be biased towards the  high-degree vertices~\cite{Baluja:2008}.
While BGER~\cite{Cao2011} avoids this defect by normalizing each row of $W$ (and $W^T$) stochastically, yielding an effect of suppressing the scores of high-degree vertices. However, the one-side normalization of BGER does not account the degrees of $\pp$ vertices when updating $\uu$, allowing high-degree $\pp$ vertices to exert a stronger impact in the diffusion process; and vice versa. 
Similar with our proposed BiRank, BGRM also applies a symmetric normalization on $W$, while the level of normalization differs (the sum of normalization exponents is $-2$ and $-1$ for BGRM and BiRank, respectively). Although it is difficult to tell which way between them is more advantageous, we point out that BiRank 
employs the matrix $S^T S$ in a similar fashion to a stochastic matrix (the same eigenvalues, see Theorem~\ref{theorem1}) and corresponds to a regularization framework, both of which are nice properties that BGRM lacks. 


\section{Foundations of BiRank}
\label{sec:regu}
In contrast to the traditional graph ranking algorithms (\eg PageRank and HITS), BiRank iterations are implicitly optimizing an objective function.  This is analogous to the manifold ranking algorithm on graphs~\cite{Zhou04learningwith}.
In what follows, we investigate the regularization framework for BiRank and present a Bayesian explanation of the ranking algorithm. These two views shed important insight into the basis of BiRank, allowing future extensions in a theoretically principled way. To show its extendability, we finally generalize the methodology to rank for the more general \textit{n}-partite graphs.

\subsection{Regularization Framework}
Inspired from the discrete graph theory~\cite{zhou:2005}, we construct the regularization function as follows:
\begin{equation}
\begin{aligned}
\small
\label{eq:regu}
R(\textbf{u},\textbf{p}) &= \sum_{j=1}^{|P|}\sum_{i=1}^{|U|} w_{ij}(\frac{p_j}{\sqrt{d_j}} - \frac{u_i}{\sqrt{d_i}})^2 \\
&+ \gamma \sum_{j=1}^{|P|}(p_j-p_j^0)^2 + \eta\sum_{i=1}^{|U|}(u_i-u_i^0)^2,
\end{aligned}
\end{equation}
where $\gamma$ and $\eta$ are the regularization parameters to combine different components~(they are constants corresponding to $\alpha$ and $\beta$ in BiRank). 
Next, we first show that optimizing Eq.~(\ref{eq:regu}) leads to the iterative BiRank algorithm, and then interpret the meaning of the regularization function.

\subsubsection{Relationship with BiRank}
Eq.~(\ref{eq:regu}) defines an objective function with the ranking scores as model parameters. To optimize the objective function, a common strategy is performing the coordinate descent~\cite{Zhang:2016:DCF}. Let us first calculate its first-order derivatives with respect to each model parameter:
\vspace{-5pt}
\begin{equation}
\label{eq:derives}
\small
\begin{aligned}
\frac{\partial R}{\partial p_j} &= (2+2\gamma)p_j - 2\gamma p_j^0 - 2\sum_{i=1}^{|U|}\frac{w_{ij}u_i}{\sqrt{d_i}\sqrt{d_j}}\\
\frac{\partial R}{\partial u_i} &= (2+2\eta)u_i - 2\eta u_i^0 - 2\sum_{j=1}^{|P|}\frac{w_{ij}p_j}{\sqrt{d_i}\sqrt{d_j}}.
\end{aligned}
\end{equation}
\vspace{-5pt}

Setting the derivatives to 0, we can obtain:
\vspace{-5pt}
\begin{equation}
\label{eq:derives_0}
\small
\begin{aligned}
p_j &= \frac{1}{1+\gamma}\sum_{i=1}^{|U|}\frac{w_{ij}}{\sqrt{d_i}\sqrt{d_j}} u_i + \frac{\gamma}{1+\gamma} p_j^0, \\
u_i &= \frac{1}{1+\eta}\sum_{j=1}^{|P|}\frac{w_{ij}}{\sqrt{d_i}\sqrt{d_j}} p_j + \frac{\eta}{1+\eta} u_i^0,
\end{aligned}
\end{equation}
\noindent which exactly recovers the BiRank iteration Eq.~(\ref{eq:birank1}) by plugging $\gamma=\frac{1-\alpha}{\alpha}, \eta=\frac{1-\beta}{\beta}$ into the equation. As such, we see that BiRank is actually iterating towards optimizing the regularization function Eq.~(\ref{eq:regu}). 

As we have shown BiRank converges to a stationary solution in Section~\ref{ss:convergence_proof}. Now a question arises: does the solution found by BiRank lead to the regularization function's global optimum?  In fact it does, as the regularization function is strictly convex in both $p_j$ and $u_i$. 
\begin{theorem}
	The regularization function $R(\textbf{u},\textbf{p})$ defined by Eq.(\ref{eq:regu}) is strictly convex and only one global minimum exists.
\end{theorem}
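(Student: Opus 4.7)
The plan is to establish strict convexity by analyzing the Hessian of $R$, then conclude uniqueness of the global minimum. First, I would expand the cross-edge term using the definition of weighted degrees:
\begin{equation*}
\sum_{i,j} w_{ij}\Bigl(\tfrac{p_j}{\sqrt{d_j}} - \tfrac{u_i}{\sqrt{d_i}}\Bigr)^2 = \|\pp\|^2 + \|\uu\|^2 - 2\,\uu^{T} S\,\pp,
\end{equation*}
where $S = D_u^{-1/2} W D_p^{-1/2}$. This rewrites $R$ as a quadratic in the stacked variable $\textbf{x} = (\pp^{T},\uu^{T})^{T}$, and identifies its Hessian (up to a factor of $2$) as the block matrix
\begin{equation*}
H = \begin{pmatrix} (1+\gamma) I_{|P|} & -S^{T} \\ -S & (1+\eta) I_{|U|} \end{pmatrix}.
\end{equation*}

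Next, I would prove $H$ is positive definite for $\gamma,\eta > 0$. The key tool is the spectral bound $\|S\|_2 \le 1$, which follows from Theorem~\ref{theorem1}: the eigenvalues of $S^{T}S$ lie in $[-\alpha\beta,\alpha\beta]$ and, taking $\alpha=\beta=1$, they are bounded above by $1$, so $\|S\pp\| \le \|\pp\|$ for every $\pp$. Applying the basic inequality $2\uu^{T} S \pp \le \|\uu\|^2 + \|S\pp\|^2 \le \|\uu\|^2 + \|\pp\|^2$ to the quadratic form gives
\begin{equation*}
\textbf{x}^{T} H \textbf{x} \ge (1+\gamma)\|\pp\|^2 + (1+\eta)\|\uu\|^2 - \|\pp\|^2 - \|\uu\|^2 = \gamma\|\pp\|^2 + \eta\|\uu\|^2,
\end{equation*}
which is strictly positive whenever $(\pp,\uu) \ne (\textbf{0},\textbf{0})$. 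Hence $H \succ 0$ and $R$ is strictly convex on $\mathbb{R}^{|P|+|U|}$.

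Finally, I would deduce uniqueness of the global minimum. Strict convexity immediately implies that any minimizer is unique; existence follows from coercivity, since the quadratic lower bound $R(\uu,\pp) \ge \gamma\|\pp-\p0\|^2 + \eta\|\uu-\u0\|^2$ forces $R \to \infty$ as $\|(\pp,\uu)\| \to \infty$. Combined with continuity, this yields existence and uniqueness of the global minimum, which by the first-order conditions Eq.~(\ref{eq:derives_0}) coincides with the stationary point to which BiRank converges.

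The main obstacle is the cross term $-2\,\uu^{T} S\,\pp$: without a sharp control of its magnitude the Hessian need not be positive definite. The crucial ingredient is therefore the spectral bound $\|S\|_2 \le 1$ inherited from Theorem~\ref{theorem1}, which is exactly what allows the $\gamma,\eta$ regularization terms to absorb the indefinite coupling between $\pp$ and $\uu$.
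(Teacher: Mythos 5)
Your proof is correct, and it reaches the same destination as the paper --- positive definiteness of the Hessian --- but by a different route for the key step. The paper writes the Hessian as $2(I - D^{-1/2} A D^{-1/2}) + 2B$, recognizes the first summand as twice the normalized graph Laplacian, and cites spectral graph theory for its positive semi-definiteness; adding the positive definite diagonal $B$ (entries $\gamma$ or $\eta$) then finishes the argument. You instead exhibit the Hessian in explicit block form and control the off-diagonal coupling $-2\uu^{T}S\pp$ directly via Cauchy--Schwarz/AM--GM together with the operator-norm bound $\|S\|_2 \le 1$, which you import from Theorem~1. The two key facts are mathematically equivalent (positive semi-definiteness of the bipartite normalized Laplacian is exactly the statement $\|S\|_2\le 1$), but your derivation is more self-contained, since it reuses the stochastic-similarity argument already proved in the paper rather than appealing to an external result; both arguments equally require $\gamma,\eta>0$ and the absence of isolated vertices so that $D^{-1/2}$ exists. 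You also add something the paper's proof silently omits: strict convexity alone does not guarantee that a minimizer \emph{exists}, and your coercivity bound $R(\uu,\pp) \ge \gamma\|\pp-\p0\|^2 + \eta\|\uu-\u0\|^2$ supplies the missing existence half of the claim ``only one global minimum exists.'' One small point of rigor: after invoking Theorem~1 with $\alpha=\beta=1$ you should note that $S^{T}S$ is positive semi-definite, so its eigenvalues lie in $[0,1]$ and hence $\|S\pp\|\le\|\pp\|$; as written, ``bounded above by $1$'' elides this, but the gap is cosmetic.
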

\begin{proof}
	According to convex optimization theory, 
	a continuous, twice differentiable function is strictly convex if and only if its Hessian matrix is positive definite. As  $R(\textbf{u},\textbf{p})$ is a continuous function, we now prove it is twice differentiable and that its Hessian matrix is positive definite. 
	
	The second order derivative of $R(\textbf{u},\textbf{p})$ is: 
	\begin{equation*}
	\label{equ_second}
	\small
	\frac{\partial^2 R}{\partial p_j\partial p_j} = 2+2\gamma;\ 
	\frac{\partial^2 R}{\partial u_i\partial u_i} = 2+2\eta;\ 
	\frac{\partial^2 R}{\partial p_j\partial u_i} = 2\frac{-w_{ij}}{\sqrt{d_i}\sqrt{d_j}}.
	\end{equation*}
	We can see $R(\textbf{u},\textbf{p})$ is twice differentiable.
	
	Let matrix $A$ be the $(|U|+|P|)\times (|U|+|P|)$ weighted adjacency matrix of the bipartite graph. Then the Hessian of $R(\textbf{u},\textbf{p})$ can be written as:
	$
	2(I - D^{-\frac{1}{2}} A D^{-\frac{1}{2}}) + 2B,
	$
	where $D$ is a diagonal matrix where each entry $D_{kk}$ denotes the weighted degree of $k$-th vertex~(can be of either side); $B$ is a diagonal matrix that each entry $B_{kk}$ is $\gamma$ or $\eta$, dependant on the choice of origin (side) for the $k$-th vertex.
	
	Note that the matrix $(I - D^{-\frac{1}{2}} A D^{-\frac{1}{2}})$ is the normalized Laplacian matrix of the graph. By spectral graph theory, 
	the normalized Laplacian matrix of a graph is positive semi-definite. Meanwhile, $B$ is also positive definite because its eigenvalues are all positive~(eigenvalues of a diagonal matrix are its diagonal values).  Finally, according to the standard linear algebra, the addition of a positive semi-definite matrix and positive definite matrix is also positive definite. Thus, we reach the conclusion that the Hessian matrix must be positive definite. The proof is finished.
\end{proof}


\subsubsection{Interpretation of Regularization}
\label{ss:interpret}
It is instructive to interpret the meaning of the regularization function and see how it is constructed. First, it can be seen as enforcing two constraints in assigning the ranking scores on the vertices: A {\it smoothness} constraint that implies structural consistency --- that nearby vertices should not vary much in their ranking scores; and a {\it fitting} constraint which encodes the query vector --- that the ranking should not overly deviate from prior belief.

\noindent\textbf{Smoothness.} The first term of Eq.~\ref{eq:regu} implements the smoothness constraint,
which constrains a vertex's normalized score to be similar to the normalized scores of its connected neighbors. Minimizing it leads to the simplified BiRank algorithm devised in Eq.~(\ref{eq:simple1}). 
Moreover, it can be seen as the squared sum ($L_2$ distance) of \textit{edge derivatives} on the graph, as introduced in graph regularization theory~\cite{zhou:2005}:
\begin{equation*}
\label{eq:derivative}
\left.\dfrac{\partial f}{\partial e}\right|_{e_{ij}} = \sqrt{\dfrac{w_{ij}}{d_i}}u_i - \sqrt{\dfrac{w_{ij}}{d_j}}p_j,
\end{equation*}
which measures the variation (or energy drop) of the ranking function on edge $e_{ij}$.  {\it I.e.}, if two vertices are strongly connected but exhibit a large difference in their scores, then the magnitude of the variation will be large. Variants of our vanilla BiRank can be derived by employing other methods to combine the edge derivatives, \eg the $L_1$ distance, which can yield and model different effects for smoothness.

\noindent\textbf{Fitting.} The second and third term of the regularization function gives the fitting constraint for the query vectors $\p0$ and $\u0$, respectively:
\begin{equation}
\label{eq:fitting}
R_f(\pp)=\sum_{j=1}^{|P|}(p_j-p_j^0)^2,\quad 
R_f(\uu)=\sum_{i=1}^{|U|}(u_i-u_i^0)^2,
\end{equation}
This fitting term is easy to understand: it regularizes the value of each vertex's score to be similar with its prior score, \ie its value in the query vector.

In our formulation of BiRank, we have chosen a MSE~(mean squared error) loss function form; other ranking-oriented loss functions, such as the BPR-OPT~\cite{BPR:2009}, may be more suited if one seeks to maintain the vertices' relative ordering in the query vector during the ranking process. We leave this possibility for future work. 

\subsection{Bayesian Explanation}
\label{ss:bayesian}

On the basis of the above regularization framework, we now present a Bayesian explanation for BiRank. 

\begin{figure}[t]
	\centering
	\includegraphics[width=0.25\textwidth]{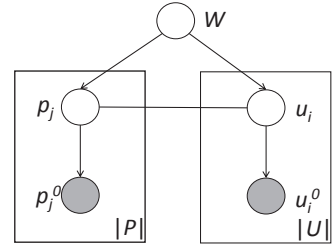}
	\vspace{-5pt}
	\caption{Graphical model representation of BiRank.}
	\label{fig:bayesian}
	\vspace{-10pt}
\end{figure}

Figure~\ref{fig:bayesian} shows the graphical model representation of the ranking method.
We model the query vectors $\p0$ and $\u0$ as observations, which are generated by the latent factors $\textbf{p}$ and $\textbf{u}$ (distributions), serving as the importance scores of the vertices; the weight matrix $W$ forms the prior for generating the latent factors. The goal is to infer the latent factors $\textbf{p}$ and $\textbf{u}$ that generate the observations $\p0$ and $\u0$. 

The MAP~(maximum a posteriori) estimation is given by:
\begin{equation*}
\argmax_{\uu,\pp} p(\uu,\pp|\u0,\p0, W).
\end{equation*}
By Bayes' rule and the conditional independence indicated in the graphical model, we have:
\begin{equation*}
\begin{aligned}
p(\uu,\pp|\u0,\p0,W) &= \frac{p(\u0,\p0|\uu,\pp)\cdot p(\uu,\pp|W)}{p(\u0,\p0)} \\
&\propto p(\u0,\p0|\uu,\pp)\cdot p(\uu,\pp|W) \\
&\propto p(\u0|\uu)\cdot p(\p0|\pp)\cdot p(\uu,\pp|W).
\end{aligned}
\end{equation*}
Note that $\pp$ and $\uu$ are not conditionally independent with each other given the prior $W$, as a vertex's score is also influenced by its neighbors' scores.
Taking the logarithm, MAP estimation is then equivalent to:
\begin{equation*}
\argmax_{\uu,\pp} \left\{
\ln p(\uu,\pp|W) + \ln p(\u0|\uu) + \ln p(\p0|\pp)
\right\}.
\end{equation*}

We then devise the conditional probabilities as follows:
\begin{equation*}
\begin{aligned}
p(\uu, \pp|W) &= \frac{1}{Z_{up}} e^{-R_s(\uu,\pp)}, \\
p(\p0|\pp) = \frac{1}{Z_p} e^{-\gamma R_f(\pp)}&, \qquad
p(\u0|\uu) = \frac{1}{Z_u} e^{-\eta R_f(\uu)}, 
\end{aligned}
\end{equation*}
where $Z_{up}$, $Z_p$ and $Z_u$ are normalization constants, and where $R_s(\uu,\pp)$ is the smoothness term, and $R_f(\pp)$ and $R_f(\uu)$ are the fitting terms defined in Eq.~(\ref{eq:fitting}). From this formalization, we can see that minimizing the regularization function is equivalent to maximizing the posteriori probability of generating the query vector. 

This shows the equivalence between BiRank's ranking process and a Bayesian network. We map the ranking problem to probabilistic graphical modeling, allowing the extension of BiRank in a probabilistic way, which is more flexible and adaptable for different applications. For example, if there is additional prior knowledge or context for the vertices, we can model them as priors of $\pp,\uu$ and use the desired distributions; moreover, aside from MAP, other inference techniques can also be applied to infer the ranking scores, such as variational inference and MCMC sampling. 

\subsection{Generalization to \textit{n}-partite Graph Ranking}
Our proposed BiRank methodology is general and versatile. Here, we generalize it to rank vertices of the more general \textit{n}-partite graphs.
A \textit{n}-partite graph is a graph whose vertices can be partitioned into $n$ different independent sets. We represent it as $G(\{P_t\}; \{E_{tl}\})$, where $P$ represents vertices, $E$ represents edges, $t$ and $l$ represent the indices of the independent vertex sets, satisfying $1\le t,l \le n$. Let the weight matrix of edges $E_{tl}$ be $W_{tl}$, which is a $|P_t|\times |P_l|$ matrix. If the graph is undirected, we have $W_{lt} = W_{tl}^T$. The symmetrically normalized matrix is defined as $S_{tl} = \sqrt{D^t} W_{tl} \sqrt{D^l}$, where $D^t$ and $D^l$ represent the diagonal degree matrix of vertices $P_t$ and $P_l$, respectively. Let the ranking vector and query vector of vertices in $P_t$ be $\textbf{p}_t$ and $\textbf{p}_t^0$, respectively. Then, the objective function for vertex ranking is defined by smoothing the connected vertices (of pairwise vertex types) and fitting the query vectors (of each vertex type):
\begin{equation*}
\small
R = \sum_t \eta_t ||\textbf{p}_t - \textbf{p}^t_0 ||^2 + \sum_{l\ne t} \gamma_{tl} \sum_{i,j} (W_{tl})_{ij} (\frac{(\textbf{p}_t)_i}{\sqrt{D^t_{ii}}} - \frac{(\textbf{p}_l)_j}{\sqrt{D^l_{jj}}})^2,
\end{equation*}
where $\gamma_{tl}$ and $\eta_{t}$ are hyper-parameters that control the importance of the corresponding component. Similar to BiRank, this regularization function is strictly convex for all model parameters. Thereby, the global minimum can be achieved by alternating optimization, which leads to the iterative update solution as:
\begin{equation}
\label{eq:n-partiteRank}
\small
\begin{aligned}
\textbf{p}_1 &= \sum_{l\ne 1} \alpha_{1l} S_{1l} \textbf{p}_l + (1 - \sum_{l\ne 1} \alpha_{1l}) \textbf{p}_1^0, \\
&...... \\ 
\textbf{p}_t &= \sum_{l\ne t} \alpha_{tl} S_{tl} \textbf{p}_l + (1 - \sum_{l\ne t} \alpha_{tl}) \textbf{p}_t^0,  \\
\end{aligned}
\end{equation}
where the hyper-parameters $\alpha_{tl}$ are associated with $\gamma_{tl}$ and $\eta_{t}$, indicating the weight of graph substructure $E_{tl}$ in contributing to the final ranking. 
Iteratively executing the above update rule until convergence, we obtain the ranking. We call this algorithm \textit{n-partiteRank}, as a generalization of BiRank for \textit{n}-partite graphs; it is easy to see when $n=2$, the algorithm exactly recovers the BiRank. Also, the time complexity of the algorithm is linear to number of edges in the \textit{n}-partite graph, which is very efficient for large-scale heterogeneous graphs in real-world applications. 
In Section~\ref{ss:recom}, we demonstrate how to utilize this generic algorithm to model user reviews ($n=3$, \ie TriRank) for the application of personalized recommendation~\cite{He:2015}. 


\section{Applications}
\label{sec:applications}
In this section, we demonstrate how to apply the BiRank method to two real-world applications; namely, 1) predicting the future popularity of items, and 2) recommending items of interest to users. We choose to model user comment data for addressing the relevant task, since it is a form of explicit feedback that is easily accessible to both content providers and external observers\footnote{In contrast, implicit feedback --- such as users' clicks on webpages and views on items --- is only obtainable for the internal content providers. For external observers, such as the third-party services, implicit feedback is usually difficult to access.}.

\subsection{Popularity Prediction}
\label{ss:popularity}
Predicting the popularity of web content has a wide range of
applications, such as online marketing~\cite{Huberman:2010}
and recommender system~\cite{He:SIGIR16}. In what follows, we first briefly introduce the task, and then show how to customize BiRank to address the problem. 

\subsubsection{Task Introduction}
A direct and objective metric to measure an item's popularity is the view count, which evaluates users' attention on the item. 
Thereby, previous works have primarily focused on modeling the view histories of items~\cite{Pinto:2013, Huberman:2010} and casted the prediction as a regression task. However, for some external services (who are not the content providers themselves), items' view histories are not easily accessible. Specifically, most websites do not explicitly provide the view history for an item. Even in the cases where a website like YouTube and Flickr provides the current number of views, one will have to repeatedly and periodically crawl the item pages to build view histories, a very bandwidth intensive activity.

To assist external observers in predicting items' popularity, we are more interested in an alternative and more viable solution --- modeling the affiliated user comments of items. In contrast to view count, the advantage of comments is the exposure of users' commenting activities up to the current time --- crawling once, one can get the previous history and perform the prediction directly. However, the key deficiency is that the comment history can be much sparser than view history, since a user viewing an item may not comment on it.
For example, it is common that two items have no comments during the time interval, while they attract views at a different rate. 
As such, existing view-based solutions~\cite{Pinto:2013, Huberman:2010} (which are mostly regression-based methods) can fail to leverage the comment series to predict popularity accurately. 

To tackle the sparsity issue in user comments for quality prediction, we need to account for more popularity signals in addition to the comment count. Here, we propose three ranking hypotheses observed from user comments that we wish to incorporate into our popularity prediction solution:

\textbf{H1. Temporal Factor:} If an item has received many comments recently, it is more likely to be popular in the near future. More recent comments are a salient signal that more users focused on the item recently. 

\textbf{H2. User Social Influence:} If the users commenting on an item are more influential, the item is more likely to receive more views in the future. This is enabled by the Web~2.0 social interfaces that propagate a user's comments to friends and followers. 

\textbf{H3. Item Current Popularity:} If an item has already been popular, 
it is likely to garner more views in the future. This is partially effected by the existing visual interfaces of Web 2.0 systems: the more views an item has, the more likely it will be promoted to users. 

\subsubsection{BiRank Customization}
To customize BiRank for a certain ranking purpose, we need to construct the weighted bipartite graph model and devise the query vectors. 

\noindent\textbf{Bipartite Graph Construction}.
As we deal with users' commenting behaviors on items, we model their relationship as a bipartite graph -- users and items form the two sides of vertices $U$ and $P$, respectively, and edges $E$ represent comments. If and only if a user has commented on an item, there is an edge between them. 
We use the edge weight to model the respective comment's contribution towards the item's future popularity. As the hypothesis $H1$ shows a strong near-term correlation, we assign $w$ based on temporal considerations. Specifically, recent (older) comments should contribute more (less) to an item's future popularity. 
To achieve this, we choose a monotonically decreasing exponential decay function: 
\vspace{-3pt}
\begin{equation} \small
\label{eq:edge_weight}
w_{ij}=\delta^{a(t_0 - t_{ij})+b},
\end{equation}
where $\delta$ is the decay parameter that controls the decay rate, $t_0$ is the ranking time and $t_{ij}$ is the commenting time; $a$ and $b$ are constants, to be tuned for the particular media and site. 
Time units are arbitrary; they can be assigned as minutes, hours, days, weeks or other units, depending on the temporal resolution and the domain of the items to rank. If no edge exists between $u_i$ and $p_j$, then $w_{ij}$ is zero.
In our empirical study, we find a setting of $\delta=0.85, a=1, b=0$ leads to good performance, and thus use this setting across datasets. As we focus on short-term popularity prediction, we set the time unit as 1 day.

\noindent\textbf{Query Vector Setup}.
We devise the user query vector $\u0$ and item query vector $\p0$ to account for the hypotheses $H2$ and $H3$, respectively. 
Intuitively, if a user has more friends, his behavior is likely to influence more users. Thus we set a user's prior score in the query vector proportional to the log value of his number of friends:
\begin{equation*} \small
\label{eq:u0}
u_i^0 = \dfrac{\log (1+g_i)}{\sum_{k=1}^{|U|} \log (1+g_k)},
\end{equation*}
where $g_i$ is user $u_i$'s number of friends at the ranking time. Note that we use add-$1$ smoothing to address the case where a user has no friends. 

$H3$ models the current popularity factor on items. As such, the item query vector should encode our prior belief on each item's popularity prior to examining its recent comments.
We capture this potential ``rich-get-richer'' effect by defining an item's score in the query vector as:
\begin{equation*}
\label{eq:p0}
p_j^0 = \dfrac{\log v_j}{\sum_{k=1}^{|P|} \log v_k},
\end{equation*}
\noindent where $v_j$ denotes the view count of item $p_j$ at ranking time. \\

After finalizing the edge weights and query vectors, 
the rationale in our design can be more clearly seen by looking into the BiRank iteration in Eq.~(\ref{eq:birank2}). 
First, it captures the mutual reinforcement between users and items --- the more recent the comments are by a user on an item, the higher the popularity score the item will receive; and in return, the popularity of the target item increases the user's influence.  
Second, the score of items and users is partially determined by the original setting of the query vector. 
To sum up, BiRank determines a user's social influence based on two source of evidence: his level of activity and his number of friends.  Analogously, BiRank determines an item's future popularity based on four aspects: the frequency and recency of comments on it, the influence of the users commenting on it, and its current accumulated popularity. 
Thus, from a qualitative point of view, we see that the formulation of BiRank can encode our hypotheses on the ranking function. 
\vspace{-5pt}

\subsection{Personalized Recommendation}
\label{ss:recom}
In this subsection, we apply the generalized, $n$-partiteRank to the application of personalized recommendation. This is a more challenging task than popularity prediction, since it needs to generate a personalized ranking of items for each user. In what follows, we first show how to employ BiRank to encode the well-known collaborative filtering effect for recommendation. Then we use the TripartiteRank (short for TriRank) to additionally model aspects (extracted from comments' texts) for enhanced recommendation. 

\subsubsection{Collaborative Filtering with BiRank}
\label{ss:recom_birank}
In recommendation systems, collaborative filtering (CF) is the most successful and widely-used technique for personalization. 
It exploits user--item interactions~(\eg ratings, click histories) by assuming that users with similar interest consume similar items. The core of a CF algorithm lies in how it models the similarity among users and items. For example, neighbor-based CF~\cite{ItemKNN:WWW2001} directly estimates the similarity by adopting statistical measure on the user--item matrix, while latent factor-based CF~\cite{He:SIGIR16} estimates the similarity by projecting items/users into a latent space. Under our BiRank paradigm, similarity is estimated by means of smoothing the user--item graph with the target user's known preference, embodied as a query vector. We use an example in Figure~\ref{fig:toy_recom} to illustrate how the smoothness works. 
\vspace{-15pt}
\begin{figure}[h]
	\centering
	\begin{subfigure}[b]{0.1\textwidth}
		\centering
		\includegraphics[width=\textwidth]{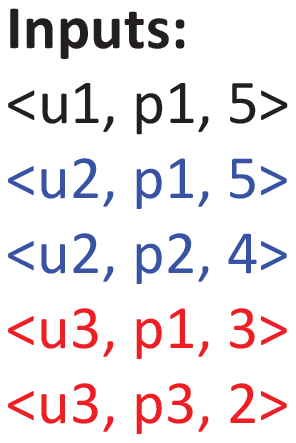}
	\end{subfigure} \hspace{+15pt}
	\begin{subfigure}[b]{0.20\textwidth}
		\centering
		\includegraphics[width=\textwidth]{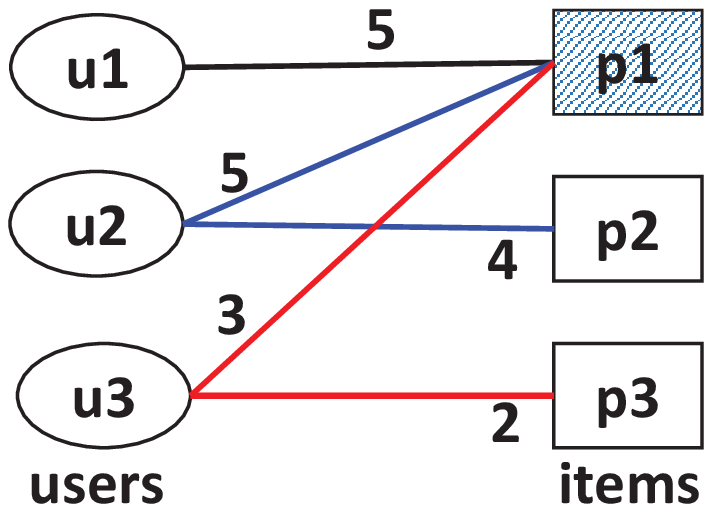}
	\end{subfigure} \vspace{-10pt}
	\caption{A toy example of using BiRank to model the collaborative filtering effect. The target user $u_1$ has previously rated item $p_1$ with a rating score 5 (in tail).}
	\label{fig:toy_recom}
	\vspace{-10pt}
\end{figure}

Users and items represent the two types of vertices, and edge weights denote the rating scores (here, a zero score means the user did not rate the item; a missing value). Assume we want to recommend items to the target user $u_1$, who has rated $p_1$ with a score of 5. We construct the query vector by setting the prior of $p_1$ to 5 and other vertices to 0. Now, we consider how the BiRank predicts $u_1$'s preference (\ie the similarity to other items) with this setting.
As $p_1$ is connected more strongly to $u_2$ than $u_3$, by the smoothness constraint, $u_2$ will be given a higher score than $u_3$. This indicates that BiRank treats $u_2$ more similar with $u_1$ than $u_3$. 
Finally, since the edge weights of $<u_2$,$p_2>$ and $<u_3$,$p_3>$ are identical, BiRank will assign $p_2$ a higher score than $p_3$, meaning that $p_2$ is a more suitable candidate to recommend for $u_1$ than $p_3$.
From this qualitative analysis, we see that by properly setting the query vector's values, smoothing the user--item relation results in a collaborative filtering effect.
More specifically, by setting the query vector as the rated items of the target user, BiRank functions similar to item-based CF~\cite{ItemKNN:WWW2001} which represents a user by his historical actions for personalization. 

\subsubsection{Modelling Aspects with TriRank}
Aside from ratings, which form the basis for collaborative filtering, most Web~2.0 systems also encourage users to pen reviews. These reviews justify a user's ratings, offering the underlying reasons for the rating by discussing the \textit{specific properties} of the item. We term these specific properties as \textit{aspects}, which are nouns or noun phrases that represent the features of items (see Table~\ref{tab:aspect_examples} as examples). Aspects are well-suited as a complementary data source for CF, since a mention of an aspect implies the user's interest in the aspect, which in turn reveals the user's preference. In this subsection, we model the aspects with TriRank to improve the CF-based recommendation. Similar to the application of BiRank to popularity prediction, we first show how to construct the tripartite graph, and then design the query vectors to implement the personalized ranking. 

\begin{table}[t]
	\begin{center}
		\caption{\textbf{Top automatically extracted aspects.}} 
		\label{tab:aspect_examples}
		\small
		\vspace{-8pt}
		\begin{tabular}{| l | C{6.5cm} | }
			\hline
			\textbf{Yelp}	& bar, salad, menu, chicken, sauce, restaurant, rice, cheese, fries, bread, sandwich, drinks  \\ \hline
			\textbf{Amazon}	& camera, quality, sound, price, battery, pictures, screen, size, memory, lens  \\ \hline
		\end{tabular}
		\vspace{-8pt}
	\end{center}
\end{table}

\textbf{Tripartite Graph Construction}.
After extracting aspects from user reviews, we construct a tripartite graph with users, items and aspects as the three types of vertices. We formalize the input as a list of triples, where each triple $<u_i, p_j, a_k>$ denotes that user $u_i$ has rated item $p_j$ with a review mentioning aspect $a_k$ and is represented as a triangle with edges $e_{ij}, e_{ik}$ and $e_{jk}$ in the graph. Figure~\ref{fig:toy_tripartite} shows an example of the tripartite graph.
\vspace{-10pt}
\begin{figure}[h]
	\centering
	\begin{subfigure}[b]{0.10\textwidth}
		\centering
		\includegraphics[width=\textwidth]{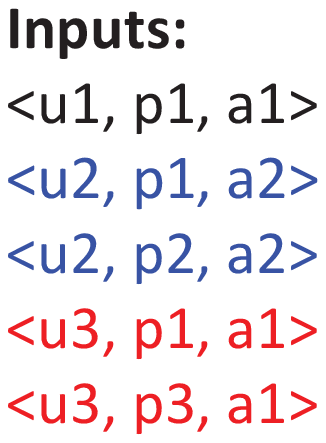}
	\end{subfigure} \hspace{+10pt}
	\begin{subfigure}[b]{0.25\textwidth}
		\centering
		\includegraphics[width=\textwidth]{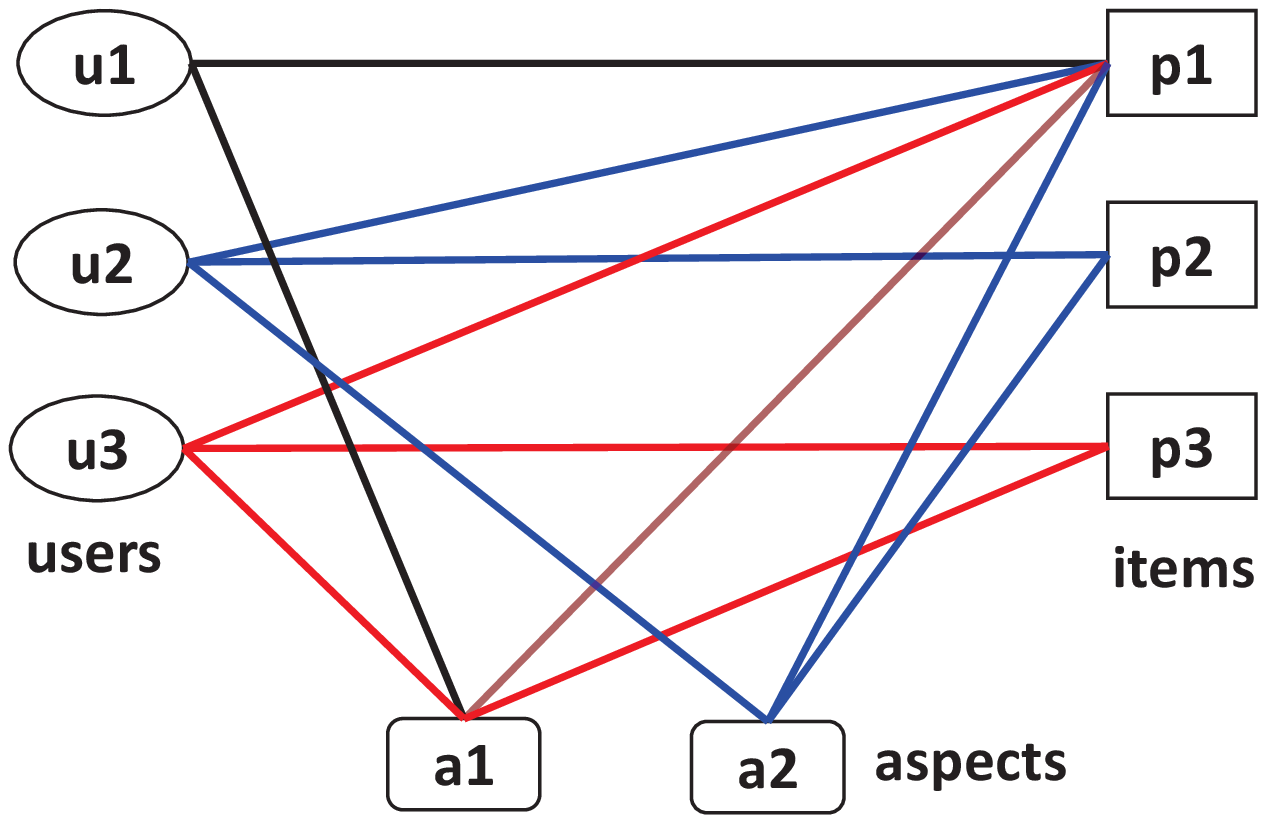}
	\end{subfigure}
	\caption{An example of the tripartite graph~(the dashed line illustrates the additional input $<u_1, p_3, a_2>$).}
	\vspace{-5pt}
	\label{fig:toy_tripartite}
\end{figure}

Each edge carries a weight, which is crucial to determine the meaning of smoothness and the behavior of TriRank. 
The setting of \textit{user--item} edge weights should encode the collaborative filtering effect: in cases with explicit feedback, it can be the rating score (as illustrated previously in Section~\ref{ss:recom_birank}); for implicit feedback, it can denote whether the user has interacted with or browsed the item (measured as either a binary yes/no, or an integer view count). Our datasets provide explicit user ratings, so we use these ratings as-is. 
The setting of aspect connected edges should reflect the aspect filtering effect: if a user is interested in an aspect, then the system should rank the items that are good at this aspect high. Thus, we set the edge weights of \textit{user--aspect} relation and \textit{item--aspect} relation to connote the degree of user interest (item specialty) with respect to the aspect. 
Once aspects are identified in reviews, we use the review frequency (\ie number of reviews that mention the aspect) within all a user's (item's) reviews as the edge weight. 
As is done in general information retrieval, we take the logarithm of the review frequency, to dampen the effect of aspects that appear very frequently.

\noindent\textbf{Query Vector Setup}.
The query vectors should encode the target user's prior preference on the vertices, which serve as the gateway for personalization. Here we discuss how to set the query vectors for target user $u_i$. 

For the item query vector $\textbf{p}_0$, an element takes a positive value if the target user has interacted with the item; otherwise, 0. Thus we adopt the $i^{th}$ row vector of the user--item matrix as the $\textbf{p}_0$ for the target user $u_i$. Similarly, the aspect query vector $\textbf{a}_0$ is set as the respective row vector of the user--aspect matrix, denoting the target user's prior preference on aspects. The user preference vector $\textbf{u}_0$ should denote the target user's similarity with other users. When user's social network is available, we can use her friends information to initialize $\textbf{u}_0$.  Due to the lack of social information in our dataset, we adopt a simple approach, setting the target user herself as 1, and all other users as 0. Considering that the weight matrix is symmetrically normalized, we also apply the $L_1$ norm on $\textbf{p}_0, \textbf{a}_0$ and $\textbf{u}_0$ respectively, for a meaningful combination. \\

Our final recommendation solution works as follows. After constructing the tripartite graph, we preform TriRank with the personalized query vector for each target user. The ranking process follows the iteration defined in Eq.~(\ref{eq:n-partiteRank}). After convergence (usually in 10 iterations), items with the highest scores serve as the recommendations for the user, and the aspect vertices with the highest scores can be used as explanatory factors for the recommendations~\cite{He:2015}. 

\section{Experiments}
\label{sec:exper}

In this section, we empirically examine BiRank's properties and effectiveness. We first conduct experiments on synthetic data to study BiRank's convergence and time efficiency. Then we perform experiments on real-world datasets to evaluate BiRank performance for the two applications of popularity prediction and personalized recommendation. 

\subsection{Experiments on Synthetic Data}

\subsubsection{Datasets}
We concern ourselves with two forms of generated graphs: 

\noindent\textbf{1. Synthetic Random Graphs.} These random graphs are generated by sampling edges from a uniform distribution. 
We control the density of generated bipartite graphs to simulate real-world graph sparsity.
Given the expected density of the graph, we visit each potential edge and generate a uniformly random number in the range $(0,1)$; if the number is less (or equal) than the density value, we add the edge into the graph. 

\noindent\textbf{2. Synthetic Power-law Graphs.} Considering that many real-world graphs follow a power-law distribution, such as document--word and user--item graphs,
we also generate the power-law bipartite graphs. We adopt the power-law graph generation algorithm in \cite{Gao:2013}: starting from an empty graph, it follows two main steps: first it assigns a degree $x$ to each vertex $v$ from the distribution $p(d_v = x)\propto x^{-\lambda}$ where $\lambda > 1$; then, it sorts the vertices by degree in decreasing order, and assigns neighbors to each vertex according to the degree. We adjust the second step for generating bipartite graph -- sampling neighbors of a vertex only from the vertices of the other side. 

%

\subsubsection{Convergence Study}
\label{ss:convergence_study}
There are two natural questions need to be answered empirically regarding BiRank's convergence\footnote{Note that due to the difficulty of controlling the eigenvalues of generated graphs, we do not empirically study the impact of the second dominant eigenvalue on convergence rate. While the impact has been theoretically proved in Theorem~\ref{theorem2}.}:
\begin{enumerate}
	\item Will BiRank iterations converge to the optimal solution of the regularization function as analyzed theoretically? 
	\item How does the algorithm hyper-parameters~(\ie $\alpha$ and $\beta$) influence BiRank's convergence rate? 
\end{enumerate}
Since our findings were consistent across many settings, we only report results with a set of representative settings.
\vspace{-10pt}
\begin{figure}[h]
	\centering
	\begin{subfigure}[b]{0.24\textwidth}
		\centering
		\includegraphics[width=\textwidth]{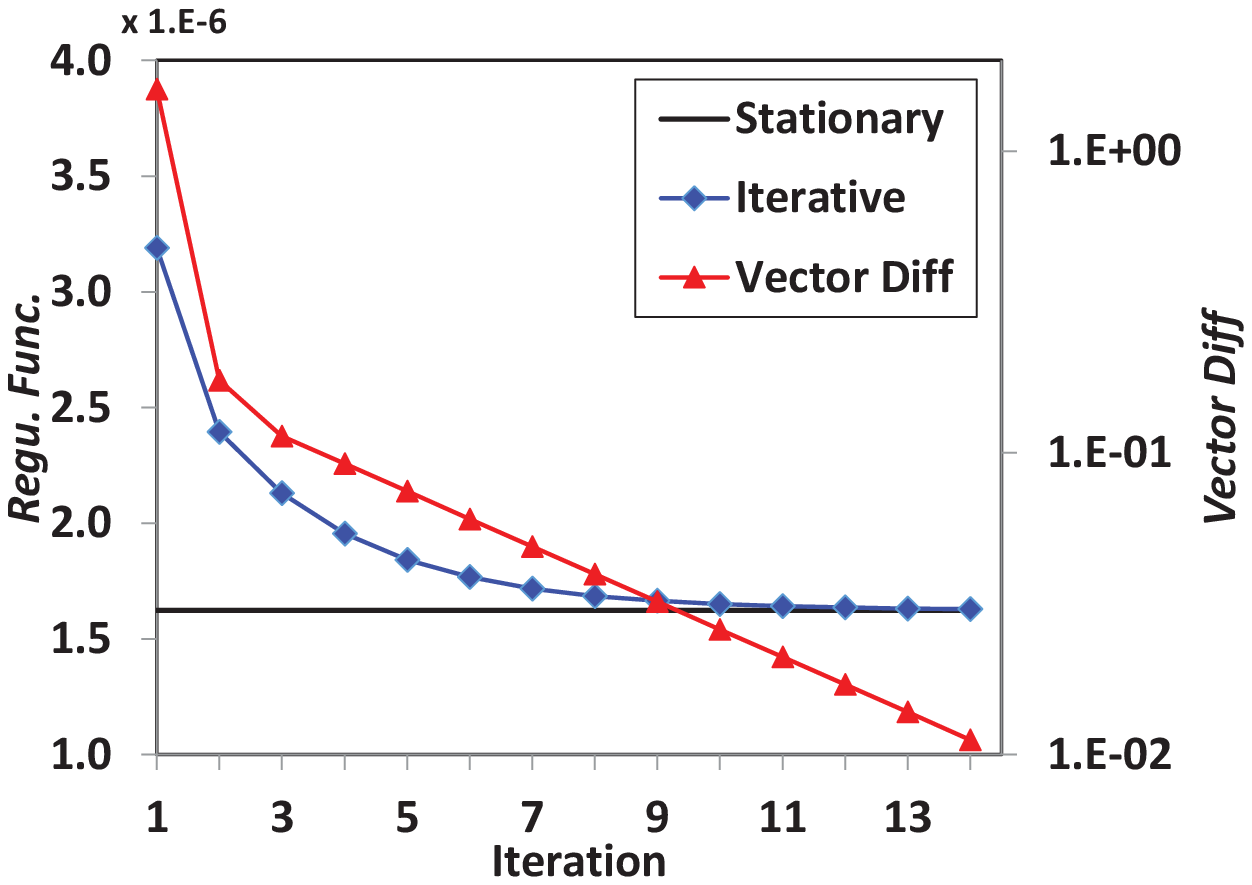} \vspace{-15pt}
		\caption{Random~($10K\times 50K$)}
	\end{subfigure} 
	\begin{subfigure}[b]{0.24\textwidth}
		\centering
		\includegraphics[width=\textwidth]{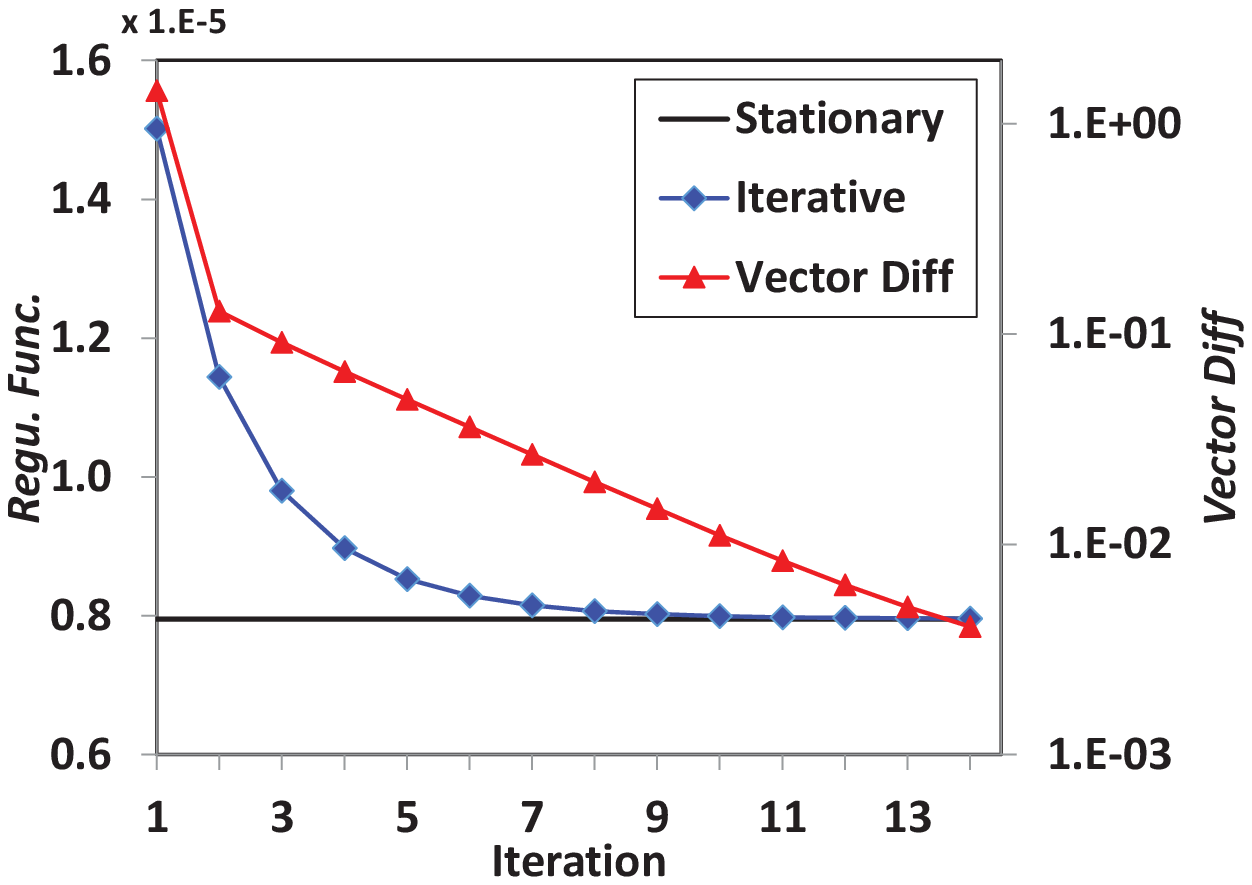} \vspace{-15pt}
		\caption{Power-law ($10K\times 50K$)}
	\end{subfigure} \vspace{-15pt}
	\caption{Convergence status of two generated graphs.}
	\label{fig:converge}
	\vspace{-5pt}
\end{figure}

\noindent\textbf{1. Convergence to optimum.}
Figure~\ref{fig:converge} plots the convergence status of each iteration on two representative synthetic graphs (the random setting has a density of  1\%; the power-law graph sets $\lambda=2$). 
The black line~(Stationary) benchmarks the optimal solution from the direct calculation of the stationary solution Eq.~(\ref{eq:stationary}). 
The blue line~(Iterative) shows the regularization function's value after the update of each iteration; the red line~(Vector Diff, y-axis scale of the right) shows the difference~(\ie squared sum) of the ranking vector before and after the update of each iteration. 
As we can see, BiRank successfully finds the optima of the regularization function in all four cases. Our further examination (not shown) validates that the ranking vector obtained by BiRank iterations is actually same with the stationary solution. This demonstrates BiRank's ability in converging to the unique and optimal solution of the regularization function, regardless of the graph structure.
Moreover, the convergence rate is rather fast for these simulated problems -- also usually within 10 iterations.
Another finding is that the deepest descents are in the early iterations, which impose the most influence to the ranking.

\noindent\textbf{2. Convergence rate \wrt algorithm parameters.} In BiRank, $\alpha$ and $\beta$ are the hyper-parameters to combine the score calculated from the graph structure and query vector. They act like the damping factor in PageRank, and are crucial to the ranking results and convergence. We study the impact of the two parameters on the convergence rate. The convergence threshold is set as 0.0001 for Vector Diff, a strict condition that guarantees a sufficient convergence.
Figure~\ref{fig:convergeRate} plots the number of iterations to converge on two graphs of size $10K\times 20K$. 
Both graphs show the same trend that BiRank needs more iterations to converge with a larger $\alpha$~(and $\beta$). 
This verifies our qualitative analysis in Section~\ref{ss:convergence_speed} that smaller value of $\alpha$~(and $\beta$) leads to a larger effect of the static query vectors, helpful in achieving quick convergence. 

\begin{figure}[t]
	\centering
	\begin{subfigure}[b]{0.20\textwidth}
		\centering
		\includegraphics[width=\textwidth]{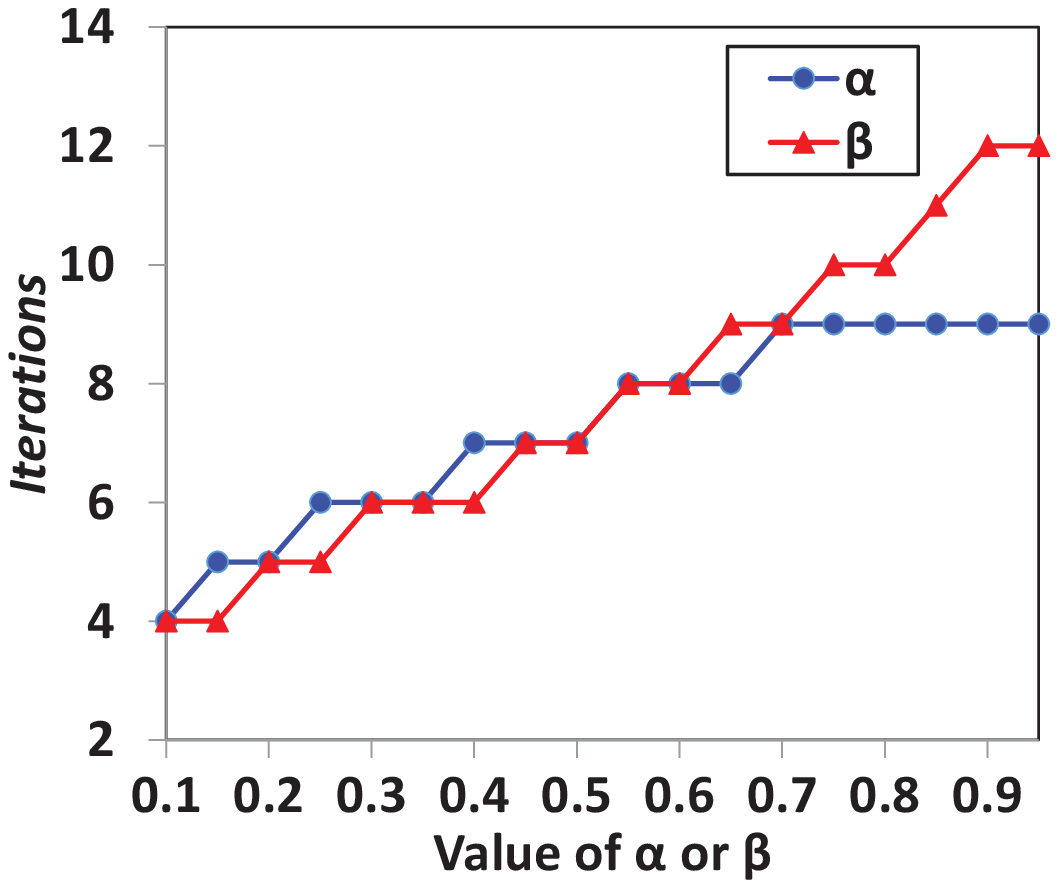} \vspace{-15pt}
		\caption{Rand (density=1\%)}
	\end{subfigure} \hspace{+10pt}
	\begin{subfigure}[b]{0.20\textwidth}
		\centering
		\includegraphics[width=\textwidth]{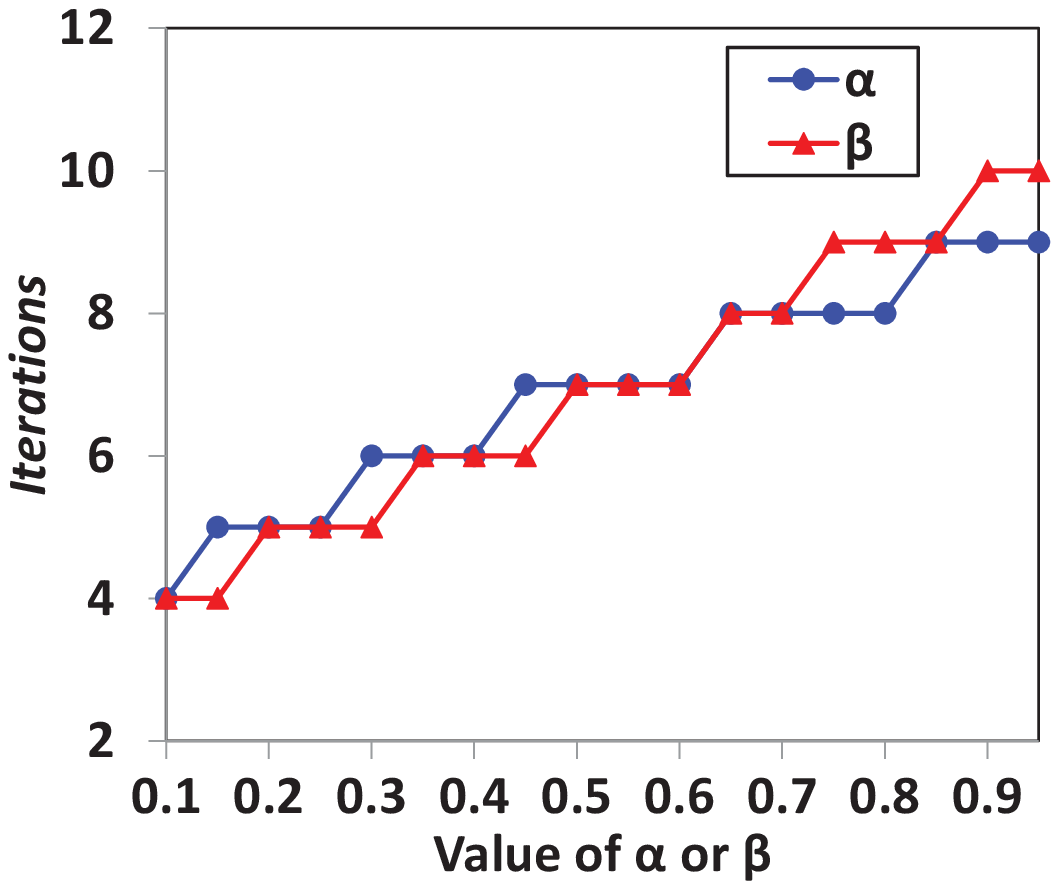} \vspace{-15pt}
		\caption{Power-law ($\lambda=2$)}
	\end{subfigure} \vspace{-5pt}
	\caption{Convergence rate \wrt $\alpha$ and $\beta$.}
	\vspace{-18pt}
	\label{fig:convergeRate}
\end{figure}

\vspace{-5pt}

\subsubsection{Time Efficiency}
In Section~\ref{ss:time_complexity}, we analyzed the theoretic time complexity of BiRank: $O(n)$, in the number of graph edges. We now empirically validate this property. We adopt sparse representation for matrices, and implement BiRank in Java. The experiments are run on a modern desktop (Intel 3.5GHz CPU with 16GB RAM running on a single thread).
\vspace{-10pt}
\begin{figure}[h]
	\centering
	\begin{subfigure}[b]{0.20\textwidth}
		\centering
		\includegraphics[width=\textwidth]{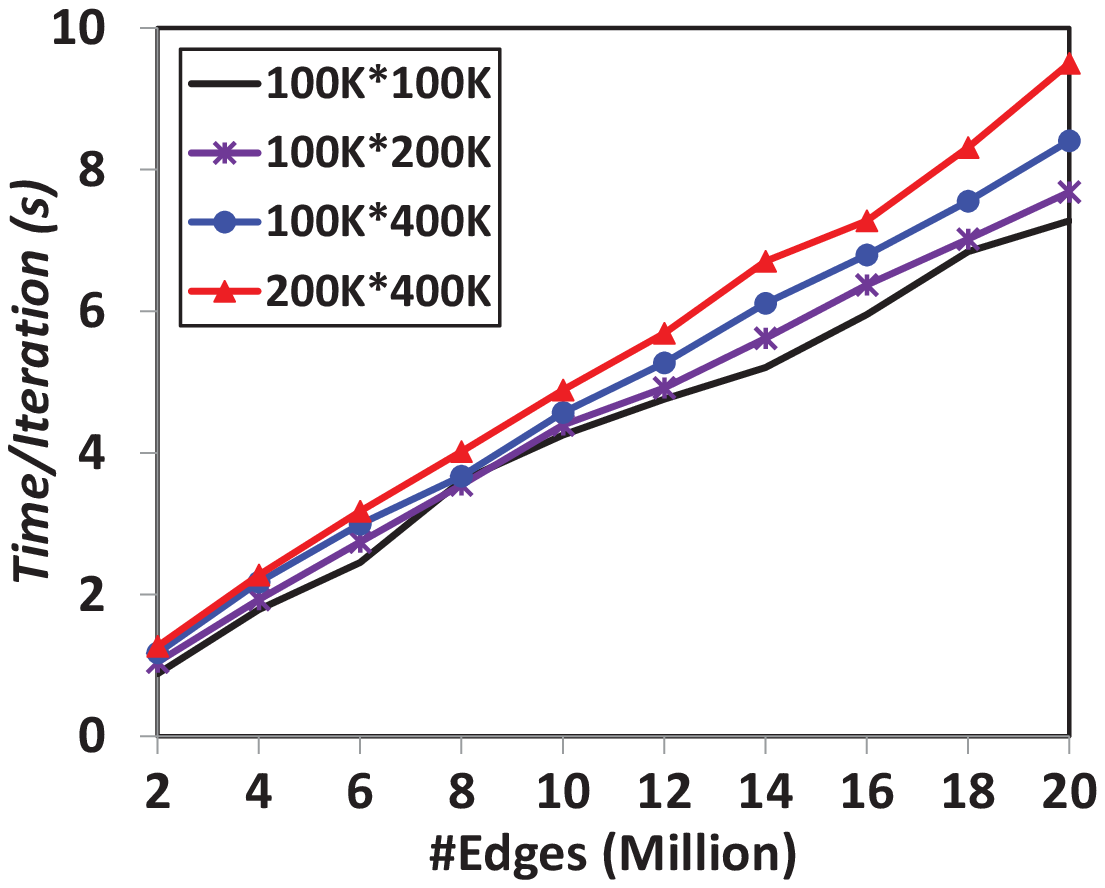}
		\caption{Random Graphs} \vspace{-8pt}
	\end{subfigure} \hspace{+10pt}
	\begin{subfigure}[b]{0.20\textwidth}
		\centering
		\includegraphics[width=\textwidth]{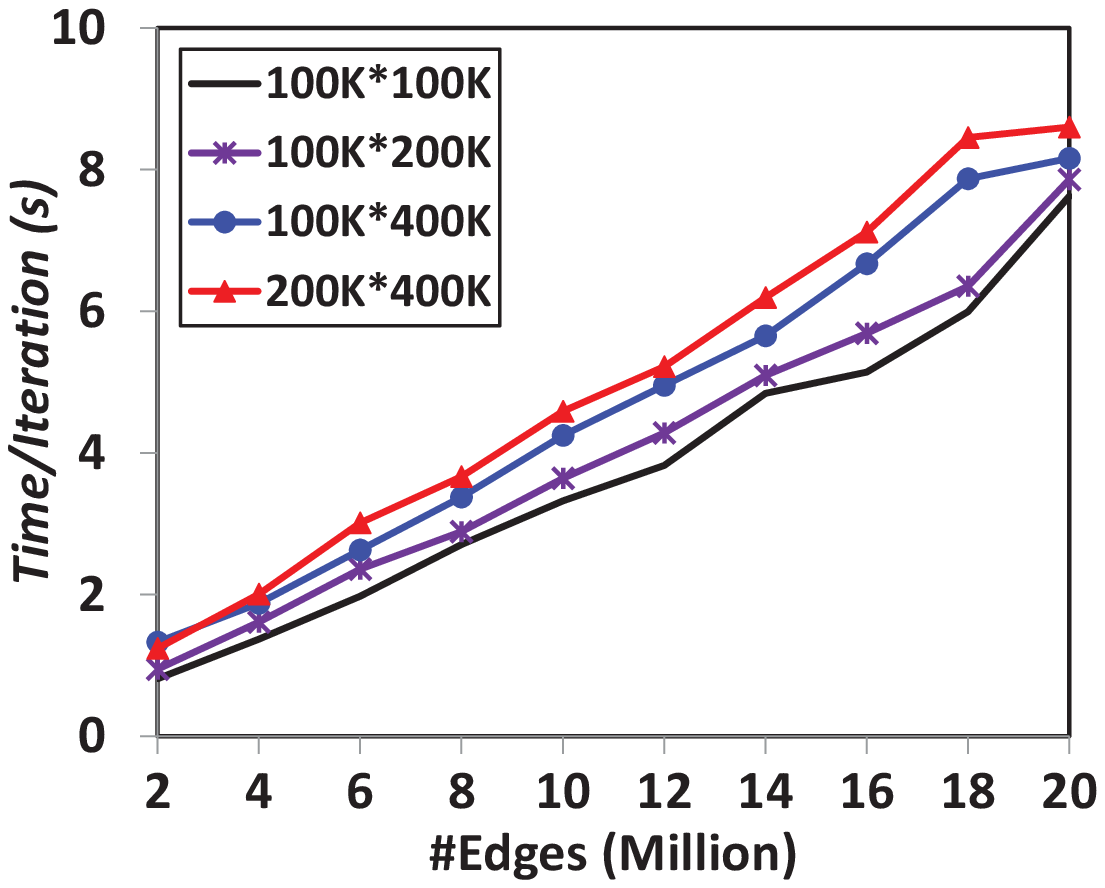}
		\caption{Power-law Graphs} \vspace{-8pt}
	\end{subfigure} 
	\caption{Running time per iteration \wrt number of edges.}
	\label{fig:time_efficiency}
	\vspace{-5pt}
\end{figure}

Figure~\ref{fig:time_efficiency} shows the average time per iteration for graphs of different settings. First, from each single line, we see that the actual running time per iteration exhibits linearity \wrt to number of edges in the graph.
More specifically, each iteration takes about 0.9 seconds for graphs with 2M edges, steadily increasing to $9$ seconds for graphs with 20M edges. This is rather efficient, given that we only run the algorithm in a single thread; for large-scale graphs, one can easily scale up the algorithm by parallelizing the matrix operations in multiple threads. 
Comparing across lines, we find that graphs of larger size take more time but still within the same magnitude.  As the edge count is the same, the additional time is due to traversing additional vertices in such larger graphs when performing matrix operations. 

\subsection{Evaluation of Popularity Prediction}
We now evaluate how does BiRank perform for the application of comment-based popularity prediction. 

\subsubsection{Experimental Settings}
\textbf{Datasets and metrics.}
Table \ref{tab:dataset} shows the demographics of the three real-world datasets used in this evaluation. 
Each dataset is constructed by the search results of some seed queries. More details about the dataset are given in \cite{sigir14:he}.
The evaluation ground-truth~(GT) is the number of views (note, not the number of comments) received in the future three days after the original crawl date~(\ie ranking date $t_0$).
\vspace{-5pt}
\begin{table}[H]
	\begin{center}
		\scriptsize
		\caption{Demographics of the three Web~2.0 datasets.}
		\vspace{-8pt}
		\label{tab:dataset}
		\begin{tabular}{ | l | c | c | c | c| c |}
			\hline
			\textbf{Dataset} & \textbf{Item\#} & \textbf{User\#} & \textbf{Comment\#} & \textbf{Avg C/I} & \textbf{Crawled Date} \\ \hline
			YouTube	& 21,653	& 3,620,487 	& 7,246,287 &	334.7 & 2012/8/9 \\ \hline
			Flickr	& 26,815	& 37,690  & 169,150	& 6.3 & 2012/9/3\\ \hline
			Last.fm	& 16,284	& 77,996 	& 530,237	& 32.6 & 2012/10/24\\ \hline
		\end{tabular}
	\end{center}
\end{table}
\vspace{-10pt}

Given a set of items, BiRank outputs a ranking list of the items, indicating their predicted popularity. To assess the quality of the predicted ranking with the GT ranking globally, we adopt the Spearman coefficient, which measures the agreement between two rankings.\\

\vspace{-5pt}
\noindent\textbf{Baselines.}
We compare with the following six baselines:

\noindent 1. \textbf{View Count~(VC)}: Rank based on the current view count of items, corresponding to our Hypothesis \textit{H3} alone.

\noindent 2. \textbf{Comment Count in the Past~(CCP)}: Rank based on the number of comments received in the 3-day period prior to $t_0$, corresponding to our Hypothesis~\textit{H1}.

\noindent 3. \textbf{Multivariate Linear model~(ML)~\cite{Pinto:2013}}: A state-of-the-art regression method for popularity prediction.
We apply this method on the comment series with the time unit as 3 days. This baseline is to test the traditional view-based methods when applied to modeling comments.

\noindent 4. \textbf{PageRank~\cite{pagerank:1999}}: This is the most widely used graph ranking method. Since the bipartite nature can cause the random walk to be non-stationary, 
we employ the standard method to set a uniform self-transition weight $w_{ii}=1$ for all nodes before converting to a stochastic matrix. 

\noindent 5. \textbf{Co-HITS~\cite{Co-HITS:2009}}: This algorithm is devised for ranking on bipartite graphs by interleaving two random walks. To make a fair algorithmic comparison with BiRank, we apply the same query vectors to Co-HITS and tune the parameters in the same way.

\noindent 6. \textbf{BGER~\cite{Cao2011}}: This is another algorithm designed for ranking on bipartite graphs. Instead of simulating a random walk, it normalizes the edge weights in a different way and is analogously explained as heat diffusion. We apply the same query vectors and parameter search for this method. 

To expedite parameter tuning, we randomly held out 10\% of the dataset as the development set, and employ grid search to find the optimal parameters. Then the performance is evaluated on the remaining 90\% as the testing items.

\subsubsection{Performance Comparison}

Table~\ref{tab:oe} shows the performance of the methods on the three datasets. First, we can see that the three bipartite graph ranking methods (lines 5 -- 7) significantly outperform other methods. This is because these methods model all the three ranking hypotheses we proposed, while other methods only partially model the hypotheses. Among the three bipartite ranking methods, BiRank achieves the best performance in general (best on two datasets YouTube and Flickr), followed by Co-HITS (best on Last.fm) and BGER. Further experimentation of 10-fold cross validation shows that the improvements of BiRank over Co-HITS and BGER on YouTube and Flickr datasets are consistent and statistically significant ($p < 0.01$, via one-sample paired t-test). Moreover, Co-HITS outperforms BGER consistently, although the random walk treatments of Co-HITS are suspicious to bias the high-degree vertices while BGER does not have this issue.
We suspect the reason of Co-HITS's strong performance might be that the bias effect is diluted by the setting of query vectors, which can regulate the random walks effectively. 

Focusing on the result of PageRank (line 4), we see that it performs very poorly for Flickr and Last.fm. This indicates that just the centrality of an item in the user--item temporal graph is insufficient for accurate popularity prediction. In addition, the performance discrepancy between PageRank and CoHITS (also a random walk-based method) highlights the importance of separately handling the two vertex types within the bipartite graph.

It is surprising that the regression approach ML underperforms CCP, as ML leverages more information: comments in the recent 30 days compared with CCP's access to only three days.
We believe there are two reasons for this: 1) the nature of short-term prediction, and 2) the sparsity of comments.
As the prediction task is a short-term one, the most recent data carries the most signal -- ``What happened yesterday will happen tomorrow''; the performance score of CCP verifies this point. Secondly, the sparsity in comment series (\eg some time units have zero count) can negatively affect the regression process in an unexpected manner.


\begin{table}
	\begin{center}
		\caption{Popularity prediction evaluated by Spearman coefficient (\%). ``*'' denotes the statistical significance for $p < 0.01$ judged by the one-sample paired t-test.}
		\vspace{-8pt}
		\label{tab:oe}
		\begin{tabular}{ | l | c | c | c |}
			\hline
			\textbf{Method} & \textbf{YouTube} & \textbf{Flickr} & \textbf{Last.fm} \\ \hline 
			1. VC	& 73.39 &	58.42	 &	67.31  \\ \hline
			2. CCP	& 83.35 &	59.43	 &	67.21  \\ \hline
			3. ML & 78.24 &	58.00  & 	38.09  \\ \hline
			4. PageRank & 80.72 &	28.15	& 	10.24  \\ \hline
			5. Co-HITS	& 85.21	& 63.81	&	\textbf{72.71}$^*$	\\ \hline
			6. BGER	& 84.10	& 63.17	&	68.94	\\ \hline
			7. BiRank (ours) &	\textbf{88.21}$^*$ &	\textbf{64.76}$^*$ &	70.93 \\ \hline
		\end{tabular}
	\end{center}
	\vspace{-15pt}
\end{table}


\subsection{Evaluation of Personalized Recommendation}
In this subsection, we study how do our BiRank and TriRank perform for the task of personalized item recommendation. 
\vspace{-5pt}

\subsubsection{Experimental Settings}
\textbf{Datasets.} We experiment with two public review datasets: Yelp\footnote{\url{yelp.com/dataset_challenge}. Downloaded on October 2014.} and Amazon Electronics\footnote{\url{snap.stanford.edu/data/web-Amazon-links.html}}. We follow the common practice in evaluating recommendation algorithms~\cite{BPR:2009, He:SIGIR16} that filters out users and items with fewer than 10 reviews. 
We used the sentiment analysis tool developed by \cite{Sentiment:2014} for extracting aspects from review texts.
Table~\ref{tab:recom_dataset} summarizes the statistics of the filtered datasets and Table~\ref{tab:aspect_examples} shows examples of the top aspects extracted.
\vspace{-8pt}
\begin{table}[h]
	\begin{center}
		\caption{\textbf{Statistics of datasets in evaluation.}}
		\vspace{-8pt}
		\small
		\label{tab:recom_dataset}
		\begin{tabular}{ | l | c | c | c | c | c | }
			\hline
			\textbf{Dataset} & \textbf{Review\#} & \textbf{Item\#} & \textbf{User\#} & \textbf{Aspect\#} \\ \hline
			Yelp	& 114,316	& 4,043 & 3,835 & 6,025 \\ \hline
			Amazon	& 55,677	& 14,370& 2,933 & 1,617 \\ \hline
		\end{tabular}
	\end{center}
	\vspace{-7pt}
\end{table}

\noindent\textbf{Baselines.} We compare with the following methods that are commonly used in top-$K$ recommendation:

\noindent 1. \textbf{Popularity~(ItemPop)}. Items are ranked by their popularity judged by number of ratings. This non-personalized method benchmarks the performance of the top-$K$ task. 

\noindent 2. \textbf{ItemKNN~\cite{ItemKNN:WWW2001}}. This is standard item-based CF. We tested the method with different number of neighbors, finding that using all neighbors works best.

\noindent 3. \textbf{PureSVD~\cite{Cremonesi:Recsys2010}}. A state-of-the-art model-based CF method for top-$K$ recommendation, which performs SVD on the whole matrix. We tuned the number of latent factors.

\noindent 4. \textbf{PageRank~\cite{Haveliwala:2002}}.
This graph method has been widely used for top-$K$ recommendation, such as by \cite{Lee:Recsys2011}. For a fair comparison, we set the personalized vector the same with TriRank's query vectors and tuned the damping factor.

\noindent 5. \textbf{TagRW~\cite{TagRW:TMIS2013}}. A state-of-the-art tag-based recommendation solution, which performs random walks on the user--user and item--item similarity graph. Since tags have a similar form with aspects, we feed aspect as tags into the method. 

For each user, we sort her reviews in chronological order. The first 80\% are used for training, followed by 10\% as validation (for parameter tuning) and 10\% as test set (for evaluation).
Given a test user, we assess the ranked list of top-$K$ items with \textit{Hit Ratio}~\cite{BPR:2009} and \textit{NDCG}~\cite{He:2015}.

\subsubsection{Performance Comparison}

\begin{figure}[t]
	\centering
	\begin{subfigure}[b]{0.24\textwidth}
		\centering
		\includegraphics[width=\textwidth]{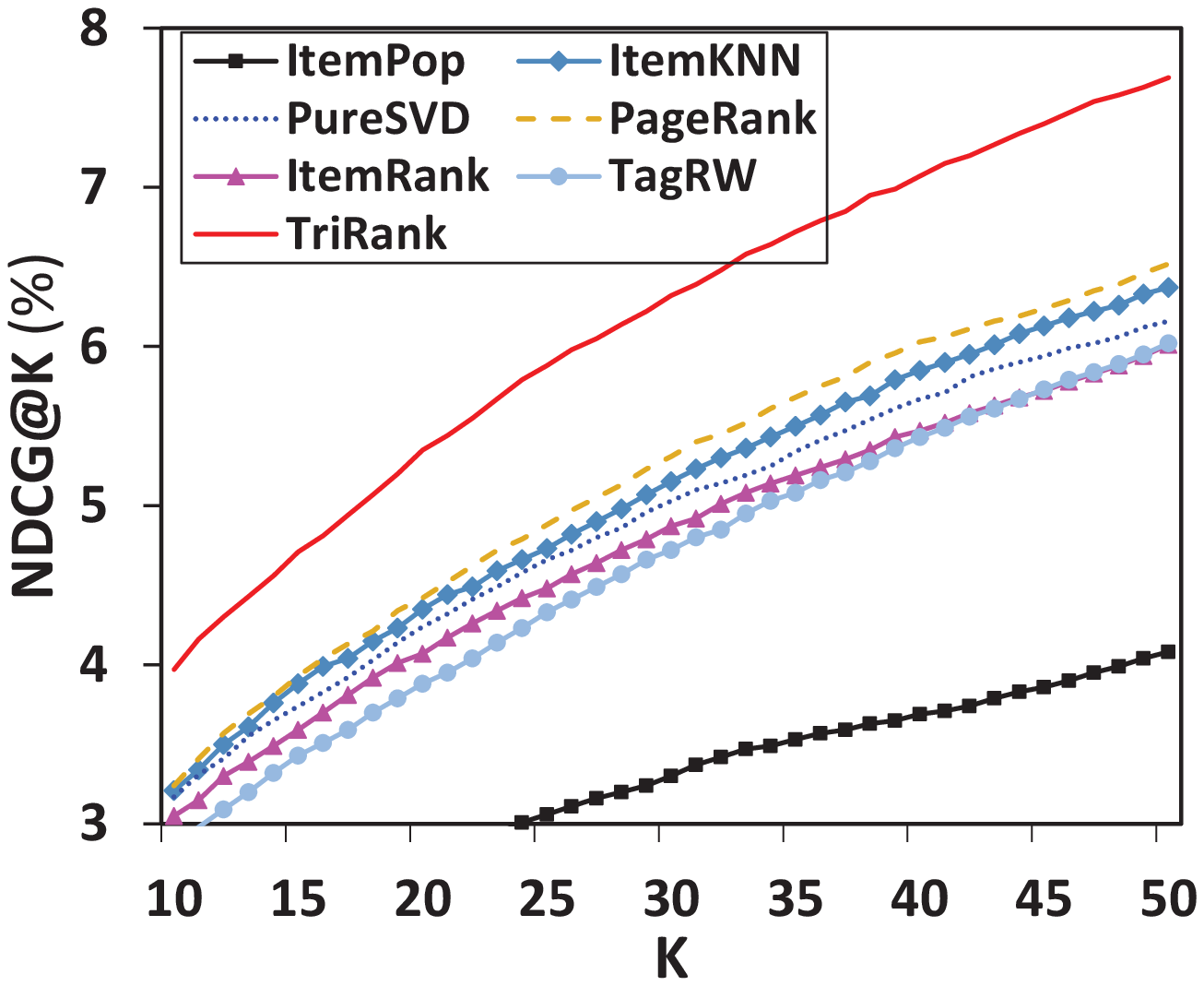}
		\vspace{-15pt}
		\caption{Yelp -- NDCG}
		\label{fig:yelp_ndcg}
	\end{subfigure} \hspace{-7pt}
	\begin{subfigure}[b]{0.24\textwidth}
		\centering
		\includegraphics[width=\textwidth]{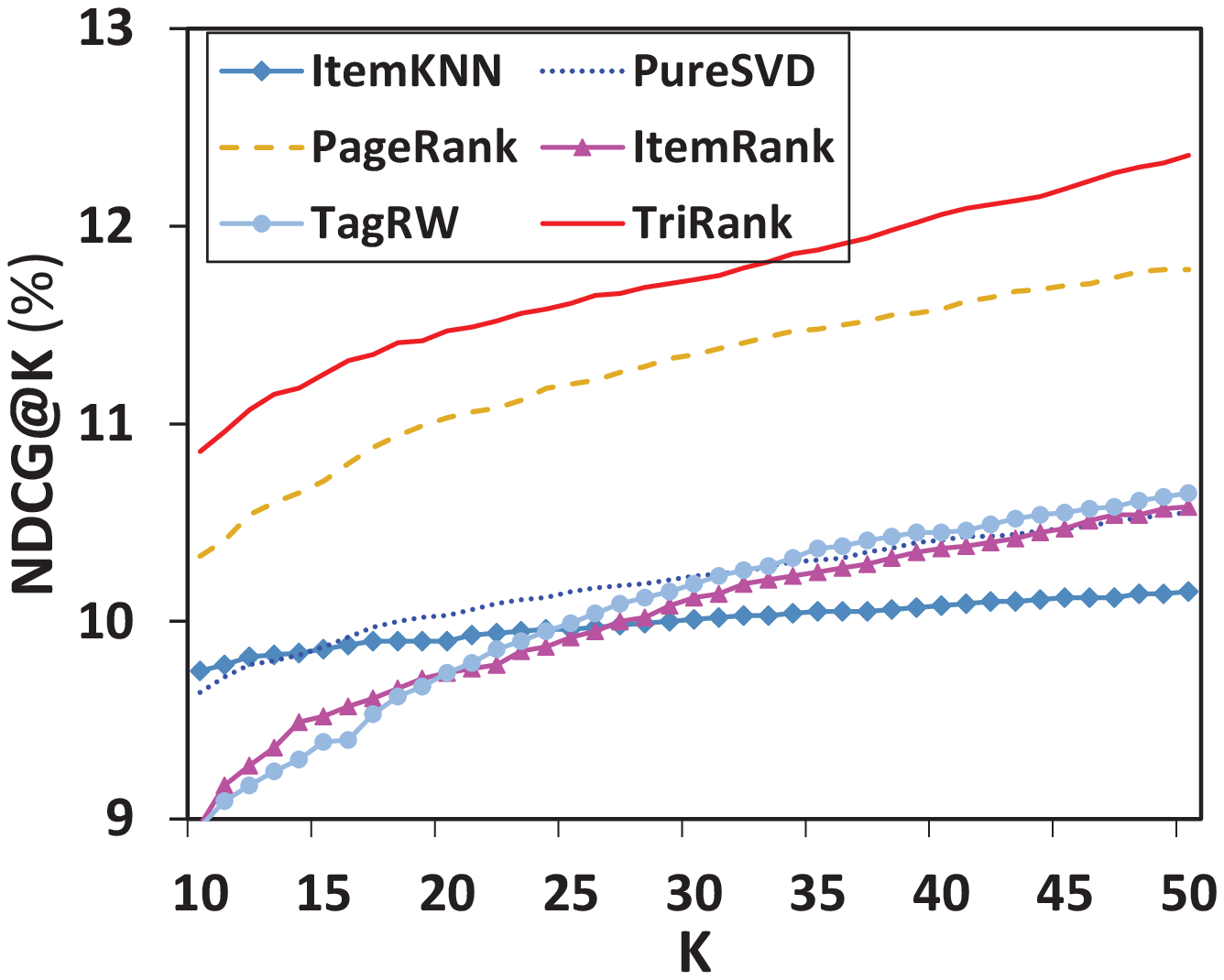}
		\vspace{-15pt}
		\caption{Amazon -- NDCG}
		\label{fig:elec_ndcg}
	\end{subfigure} \hspace{-7pt}
	\vspace{-5pt}
	\caption{\textbf{Performance comparison of top-$K$ recommendation evaluated by NDCG from position 10 to 50~(\ie $K$).}}
	\vspace{-10pt}
	\label{fig:recom_performance}
\end{figure}

Figure~\ref{fig:recom_performance} plots the performance of top-$K$ recommendation methods evaluated by NDCG from position 10 to 50. Performance of hit ratio shows a similar trend with NDCG, and is thus omitted for space. ItemPop performed very weakly on the Amazon dataset, and is entirely omitted in Figure~\ref{fig:elec_ndcg} to better highlight the performance of the other methods. As can be seen, our TriRank consistently outperforms the baselines with a large margin, and the one-sample paired t-test verifies that the improvements over all baselines are statistically significant with $p < 0.01$. For a more detailed discussion, we further show the concrete scores obtained at the position~50 in Table~\ref{tab:recom_performance}.


Focusing on Lines~1--4 that are all CF methods that only model the user--item relationship, we see that our BiRank achieves the best performance on both datasets; specifically, it improves over the competitive recommendation methods ItemKNN and PureSVD with a relative improvement about $8.3\%$. This is very encouraging, and gives evidence of the merit of our specification of BiRank (in Section~{\ref{ss:recom_birank}}) for collaborative filtering.
ItemKNN performs very well on the Yelp dataset (better than PureSVD), but poorly on the Amazon one. One possible reason comes from data sparsity: as in Table~\ref{tab:recom_dataset}, each item of the Amazon dataset only has 3.9 reviews on average. In such cases, the statistical similarity measure may fail in neighbor-based CF.  In contrast, model-based methods are more robust to sparse data by projecting users and items to the latent space. Lastly, we see Item Popularity performs the worst, indicating the importance of modeling users’ personalized preferences, rather than just recommending popular items.

Moving to Lines~5--7 of review-based methods, we see that they generally improve over the methods that use CF only, indicating the utility of reviews (more specifically, item aspects) for uncovering users' preference and complementing with user ratings. Second, TriRank achieves the best performance, further improving over BiRank with over a 10\% relative improvement and outperforming PageRank and TagRW significantly. 
This verifies the effectiveness of our TriRank in incorporating the aspects for enhanced recommendation. 
Lastly, TagRW is inferior to PageRank in utilizing the same aspect source. We believe the main reason comes from TagRW's transformation of the user--item--aspect graph to user--user and item--item graphs, which can cause some signal loss especially when the original relationships are sparse. 

\begin{table}[t]
	\begin{center}
		\caption{Recommendation performance ($\%$) evaluated at Rank~50. BiRank outperforms CF-based methods (Lines 1--3) and TriRank outperforms all other methods.} 
		\label{tab:recom_performance}
		\vspace{-8pt}
		\begin{tabular}{|l|c|c|c|c|}
			\hline
			\textbf{Dataset} & \multicolumn{2}{|c|}{\textbf{Yelp}} & \multicolumn{2}{|c|}{\textbf{Amazon}} \\ \hline 
			\textbf{Metric}(\%) & \textbf{HR} & \textbf{NDCG} & \textbf{HR} & \textbf{NDCG} \\ \hline
			1. ItemPop		& $10.61$ & $4.08$ 	& $6.13$	& $2.37$  \\ \hline
			2. ItemKNN 	& $15.72$ & $6.37$	& $12.69$	& $10.15$ \\ \hline
			3. PureSVD	& $14.94$ & $6.16$  & $14.94$	& $10.55$ \\ \hline
			4. BiRank (ours) & $\textbf{17.00}^*$ & $\textbf{6.90}^*$	& $\textbf{15.97}^*$	& $\textbf{11.16}^*$ \\ \hline \hline
			5. PageRank & $15.90$ & $6.52$  & $17.49$	& $11.78$ \\ \hline
			6. TagRW    & $15.25$ & $6.02$  & $17.47$	& $10.65$ \\ \hline
			7. TriRank (ours) & $\textbf{18.58}^*$ & $\textbf{7.69}^*$ & $\textbf{18.44}^*$	&$\textbf{12.36}^*$ \\ \hline
		\end{tabular}
		\vspace{-10pt}
	\end{center}
\end{table}

\section{Conclusion}
\label{sec:conclusion}
We focus on the problem of ranking vertices of bipartite graphs, and more generally, $n$-partite graphs. 
We devise a new, generic algorithm -- BiRank -- which ranks vertices by accounting for both the graph structure and prior knowledge. 
BiRank is theoretically guaranteed to converge to a stationary solution, and can be explained from both a regularization view and a Bayesian view. This appealing feature allows future extensions to BiRank to be grounded in a principled way. 
To demonstrate the efficacy of our proposal, we examine two ranking scenarios: a general ranking scenario of item popularity prediction
by modeling the user--item binary relationship, 
and a personalized ranking scenario of item recommendation by modeling the user--item--aspect ternary relationship.
By properly setting the graph's edge weights and query vectors, BiRank can be customized to encode various ranking hypotheses.
Extensive experiments on both synthetic and real datasets 
demonstrate the effectiveness of our method.
In future, we will study how to optimally learn the hyper-parameters of BiRank. Owing to the two views of BiRank, two solutions can be explored --- by adapting the parameters based on the validation set~\cite{Rendle:2012},
or by integrating over the parameters under the Bayesian network formalism.
Moreover, we will explore how to integrate the graph regularization framework with matrix factorization methods, which have been shown to be very effective for many tasks such as recommendation~\cite{He:SIGIR16} and clustering~\cite{He:WWW2014}. 


\vspace{-10pt}
\section*{Acknowledgments}
{This research is supported by the National Key Research and Development Program of China (No. 2016YFB1000905) and NSFC under Grant No. U1401256. NExT research is supported by the National Research Foundation, Prime Minister's Office, Singapore under its IRC@SG Funding Initiative. The authors thank the anonymous reviewers for their valuable comments, and acknowledge the additional discussion and help from Liqiang Nie, Jun-Ping Ng, Tao Chen and Yiqun Liu. This paper is an extended version of the SIGIR '14 conference paper~\cite{sigir14:he}. Xiangnan He is the corresponding author. 
}

\bibliographystyle{IEEEtran}

\vspace{-30pt}

\begin{IEEEbiography}[{\includegraphics[width=1in,height=1.25in,clip,keepaspectratio]{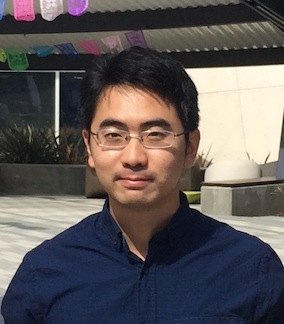}}]{Xiangnan He}
	is currently a postdoctoral research fellow with the School of Computing at the National University of Singapore. His research interests include information retrieval, recommender systems, multimedia and machine learning. His works have appeared in several major international conferences such as SIGIR, WWW, MM, CIKM and AAAI. He has served as program committee member of international conferences such as SIGIR, WWW and EMNLP. 
\end{IEEEbiography} \vspace{-35pt}
\begin{IEEEbiography}[{\includegraphics[width=1in,height=1.25in,clip,keepaspectratio]{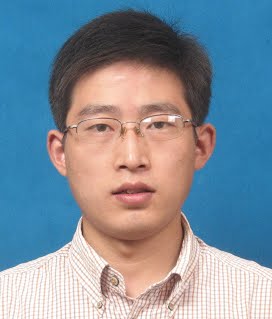}}]{Ming Gao}
	is an associate professor of Software Engineering Institute with the East China Normal University, China. He received his doctorate from the School of Computer Science, Fudan University. 
	His research interests include uncertain data management, streaming data processing, social network analysis and data mining. His work appears in major international conferences including ICDE, ICDM, DASFAA and WebSci.
\end{IEEEbiography} \vspace{-35pt}
\begin{IEEEbiography}[{\includegraphics[width=1in,height=1.25in,clip,keepaspectratio]{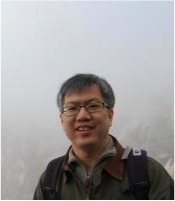}}]{Min-Yen Kan}
	is an associate professor at the National University of Singapore.  
	He previously served as a member of the executive committee of the Association of Computational Linguistics (ACL) and maintains the ACL Anthology, the community's largest archive of published research.  He is an associate editor for the Springer ``Information Retrieval'' journal. His research interests include digital libraries and applied natural language processing and information retrieval.  
	Specific projects include work in the areas of scientific discourse analysis, full-text literature mining, machine translation, lexical semantics and applied text summarization.
\end{IEEEbiography} \vspace{-20pt}
\begin{IEEEbiography}[{\includegraphics[width=1in,height=1.25in,clip,keepaspectratio]{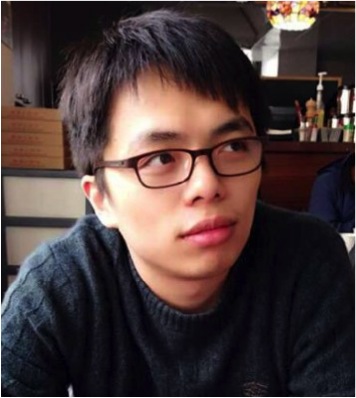}}]{Dingxian Wang}
	is a Research Engineer with the Search Science Department in eBay. He is specifically working on the project of product classification for improving search ranking. His research interests mainly include information retrieval, software engineering, natural language processing and applied machine learning. His works have appeared in international conferences including WISE, CSE and ICSSP. 
\end{IEEEbiography}
\vfill
\end{document}